\newtheorem{thm}{Theorem}
\newproof{pf}{Proof}
\algnewcommand{\LineComment}[1]{\State /*#1*/}
\journal{Journal of Systems Architecture}
\begin{document}

\begin{frontmatter}

\title{A Federated Deep Learning Framework for \\ Privacy Preservation and Communication Efficiency}   

\author[1]{Tien-Dung Cao}
\ead{dung.cao@ttu.edu.vn}
\author[2]{Tram Truong-Huu\corref{cor1}}
\ead{truonghuu.tram@singaporetech.edu.sg}
\author[1]{Hien Tran}
\ead{hien.tran@ttu.edu.vn}
\author[3]{Khanh Tran\corref{cor2}}
\ead{tvkhanh@hcmiu.edu.vn}

\cortext[cor1]{Corresponding author}
\cortext[cor2]{This work has been done when Khanh Tran was with Tan Tao University, Long An, Vietnam}

\address[1]{School of Engineering, Tan Tao University, Long An Province, Vietnam} 
\address[2]{Singapore Institute of Technology, Singapore}
\address[3]{Department of Mathematics, International University, Vietnam National University-Ho Chi Minh City, Ho Chi Minh City, Vietnam}

\begin{abstract}
Deep learning has achieved great success in many applications. However, its deployment in practice has been hurdled by two issues: the privacy of data that has to be aggregated centrally for model training and high communication overhead due to transmission of large amount of data usually geographically distributed. Addressing both issues is challenging and most existing works could not provide an efficient solution. In this paper, we develop FedPC, a Federated Deep Learning Framework for Privacy Preservation and Communication Efficiency. The framework allows a model to be learned on multiple private datasets while not revealing any information of training data, even with intermediate data. The framework also minimizes the amount of data exchanged to update the model. We formally prove the convergence of the learning model when training with FedPC and its privacy-preserving property. We perform extensive experiments to evaluate the performance of FedPC in terms of the approximation to the upper-bound performance (when training centrally) and communication overhead. The results show that FedPC maintains the performance approximation of the models within $8.5\%$ of the centrally-trained models when data is distributed to 10 computing nodes. FedPC also reduces the communication overhead by up to $42.20\%$ compared to existing works.
\end{abstract}

\begin{keyword}
Federated learning, deep learning, parallel training, communication efficiency, privacy preserving.
\end{keyword}
\end{frontmatter}

\section{Introduction}
\label{sec:intro}
The development of information technology makes the data collection and storage process much easier than ever. This leads to the explosion of the amount of data collected every day in any daily life aspects such as healthcare, business, manufacturing, security, to name a few. Analyzing such data provides a lot of useful insights that could help one anticipate business or reaction plans to improve their ultimate objectives or prevent damages due to negative events. Obviously, analyzing such a huge amount of data requires tools and software that automate the analytic process and achieve high throughput at least equal to the velocity of data collection so as to provide the analytic results as soon as possible.

Generally, in most of real-world applications, data, especially personal information, is generated and stored in data silos, either end-users' devices or service providers' data center. Most conventional machine learning and deep learning are trained in a centralized manner, requiring training data to be fused in a data server. In the era of Internet of Things (IoT), collecting and aggregating data from multiple sources incur significant communication overhead and network resource cost. It is even more difficult with resource-constrained devices such as IoT devices or in the area where network resources are limited such as developing countries.

The challenges are not only in the problem of transporting high-volume, high-velocity, high-veracity, and heterogeneous data across organisations but also in the data privacy issue as the data owners do not want their data to leave their premises, especially the data contains sensitive information such as medical records, bank transactions, security logs, etc. This privacy concern prevents the data owners from contributing their data to the training process even though they might know that their data could improve the model performance. This motivates us to develop a novel training framework that allows the model to be trained on different private datasets without relocating/gathering them to the same location. Without trusting any third parties including training coordinator and data owners, such a training approach needs to ensure that there is not any sensitive data leaked, thus preventing a data owner or training coordinator from inferring the data of other owners.  

Several works have addressed the above challenges~\cite{fedavg2017,phong2019}. However, these works either chosen to address only one of the challenges or compromised one for the other. For instance, Phong and Phuong~\cite{phong2019} developed a privacy training algorithm that preserves the privacy of each private dataset in the ensemble training dataset but requiring a sequential training process through all the private datasets and model is repeatedly copied from one computing node to another computing node. Combining the two aforementioned challenges in an integrated framework is a hard problem that cannot be achieved just by adopting or combining the existing solutions altogether. Protecting the data privacy by keeping the data in the respective premise of the data owner requires a training framework that allows the data owner to exchange the model parameters learned with the data with the training coordinator that maintains a global model instance. Such a framework has to minimize the communication overhead, i.e., the amount of data exchanged, idle time of the training processes on different private datasets, and avoid revealing additional information about the data. 

In this paper, we design and develop \textbf{FedPC}, a \textbf{Fed}erated Deep Learning Framework for \textbf{P}rivacy Preservation and \textbf{C}ommunication Efficiency. The framework addresses the above challenges in an efficient manner. We consider a threat model that reflects practical scenarios where all the parties are honest-but-curious or some of them collude to attack a particular victim. The main contribution of the paper is summarized as follows:
\begin{itemize}
    \item We develop a communication protocol that allows a training worker (i.e., the training process runs on a particular private dataset) to inform the master (i.e., the training coordinator that has the global model instance) about the evolution of the model parameters without revealing any sensitive information. The protocol also minimizes the communication overhead, i.e., the amount of data exchanged among the master and workers during the training. 
    
    \item We develop a goodness function that quantifies the impact of a private dataset on the global model instance, thus determining how much the parameters of the global model instance should be updated based on the information obtained from the respective training worker. Since the information from different training workers may not provide the same update direction for a certain model parameter, the goodness function also allows us to determine the update direction for a certain parameter.  We assume that the number of data samples in each private dataset is arbitrary: one may have a large number of samples while the other has much fewer samples. We also assume that the data samples in all the private datasets are stationary and have the same distribution. This relieves us from the concept of data drift and domain adaptation problems. 
    
    \item We develop a parallel and synchronous training algorithm at the master that invokes the training algorithm on each worker and wait for all the workers completing a training epoch before updating the global model instance. 
    
    \item We provide a formal analysis of the convergence of the training process as well as the privacy protection of the proposed framework. We show that the combination of the communication protocol and the goodness function adds non-linear factors to the information exchanged among the master and workers, thus making it hard to infer useful information about the private dataset of each worker.
    
    \item We carry out extensive experiments with two deep learning problems: image classification and image segmentation. We use two existing deep learning models for the experiments: ResNet50 FIXUP~\cite{Fixup_ZhangDM19} trained on CIFAR-$10$ dataset and U-Net~\cite{olaf:2015} trained on the LGG Segmentation Dataset~\cite{BUDA2019218}. We compare the performance (i.e., accuracy) and efficiency (i.e., communication overhead) of the proposed framework with two existing works~\cite{fedavg2017} and~\cite{phong2019}. 
\end{itemize}

The rest of the paper is organized as follows. In Section~\ref{sec:related_work}, we discuss the related works. In Section~\ref{sec:framework}, we present the design of the framework and the algorithms executed at the master and workers. In Section~\ref{sec:theory_analysis}, we provide a formal analysis of convergence and privacy-preserving properties of the framework. In Section~\ref{sec:experiment}, we present the experiments and analysis of results before we conclude the paper in Section~\ref{sec:conclusion}.

\section{Related Work}
\label{sec:related_work}

\subsection{Distributed Deep Learning} 

Conventional training techniques such as gradient descent, stochastic gradient descent, deep descent derivatives are based on an iterative algorithm that updates the model parameters as the iterations proceed to converge to specific values say \textit{optimal parameters}. The convergence speed (the number of iterations) depends on the learning rate and convergence condition set by the algorithm and usually requires a long-running time to complete. Many works in the literature have tried to speed up the convergence by considering different aspects including distributed storage of training data (i.e., data parallelism), distributed operation of computational tasks (i.e., model parallelism). In~\cite{LI201695}, Li \textit{et al.} analyzed different parallel frameworks for training large-scale machine learning and deep learning models. Example frameworks include Theano\footnote{Theano:~\url{http://deeplearning.stanford.edu/wiki/index.php/Neural_Networks}}, Torch~\cite{TU19961225}, cuda-convnet and cuda-convnet2\footnote{cuda-convnet2:~\url{https://code.google.com/p/cuda-convnet2/}}, Decaf~\cite{pmlr-v32-donahue14}, Overfeat~\cite{sermanet2013overfeat}, and Caffe~\cite{Jia:2014}. Most of these frameworks are open-source and optimized by NVIDIA GPUs using the Compute Unified Device Architecture (CUDA) programming interface. Similarly, in~\cite{Jia2019}, Jia \textit{et al.} introduced a comprehensive Sample-Operator-Attribute-Parameter (SOAP) search space of parallelization strategies for deep neural networks, namely FlexFlow, executing on multiple GPUs or CPUs. While those frameworks achieve good acceleration compared to computation on CPU architectures, their speedup is compromised by the exponential increase in the amount of data processed by the system.  

Focusing on reducing the number of computation steps, in~\cite{lecun2015}, Zhang \textit{et al.} proposed an algorithm, namely Elastic Averaging Stochastic Gradient Descent (EASGD), that trains a neural network in a distributed manner with both synchronous and asynchronous modes. The authors demonstrated that the proposed algorithm is better than DOWNPOUR~\cite{Dean:2012}. However, they assumed that all training workers have the entire dataset, thus ignoring the privacy issue. Further, both EASGD and DOWNPOUR do not consider the communication cost since the entire gradients and models are exchanged among the training workers. Also using a distributed computing approach, in~\cite{Le:2011}, Le \textit{et al.} proposed to use the Map-Reduce framework to enable parallel operations during the training of a deep neural network. 

\subsection{Communication Efficiency}
Focusing on reducing communication cost, Lin \textit{et al.}~\cite{Lin:2018} proposed a deep gradient compression approach to reduce the communication bandwidth between multiple training nodes. The gradients are compressed after a sequence of a pre-defined number of mini-batches. The authors defined a threshold to select which gradient values to be sent. He \textit{et al.}~\cite{He2018a} and Wen \textit{et al.}~\cite{Wen2017} used a ternary concept to represent the update direction of model parameters or gradients, i.e., $-1$ for decreasing parameters, $0$ for unchanged parameters and $+1$ for increasing parameters. We adopt this ternary approach in our work to minimize the communication overhead between training workers and the master but with an enhancement to protect data privacy. We further develop a goodness function that quantifies the impact of each private dataset on the global model, thus updating the model accordingly.

Alistarh \textit{et al.}~\cite{Alistarh2017-QSGD} defined the algorithm of gradients compression, named Quantized SGD (in short: QSGD). This approach firstly flatters the gradients into a 1-dimensional vector and then divides it into $k$ buckets with size $n$. The bucket is later compressed by a quantizer function to reduce data size to $\sqrt{n}(log(n + \mathcal{O}(1))$ before being transmitted over the network. Recently, Reisizadeh \textit{et al.}~\cite{reisizadeh20a-FedPAQ} applied this work for the federated learning, named FedPAQ, to reduce data transmission. However, before applying this approach, this work locally updates the model $T$ iterations (named communication delay) to reduce the communication overhead between the server and computing nodes. 

Recently, Sattler \textit{et al.}~\cite{Sattler2020} proposed a method, named Sparse Ternary Compression (STC), that compresses both downstream and upstream communication as well as working with non-iid data. This approach firstly extends the top-k sparsification to get the sparse ternary tensor in the flattened tensor $T^* \in \{-\mu, 0, \mu\}$ where $\mu$ is the average value of top-k values. Afterward, they used the Golomb code~\cite{Golomb-TIT.1966} to compress the position of non-zero elements (i.e., $\mu$'s value) in $T^*$ before transferring on the network. To decode this tensor, the authors also need a one-bit tensor to represent the sign of $\mu$'s value in $T^*$. 
 
\subsection{Privacy-Preserving}

In~\cite{phong2019}, Phong and Phuong proposed a framework that allows multiple data owners to train a deep learning model over the combined dataset to obtain the best possible learning output without sharing the local dataset, owing to privacy concerns. The authors developed two network topologies for the exchange of model parameters: a master-worker topology and a fully-connected (peer-to-peer) topology. The communication among training peers, master and workers is secured by the standard Transport Layer Security (TLS). However, this work does not enable parallel training as the deep learning model is sequentially trained on the private datasets. 

In~\cite{Shokri2015}, Shokri \textit{et al.} designed a system in which each local training worker asynchronously shares a part of local gradients obtained on its dataset to the master/server. However, while the training data is not shared among the training workers, an honest-but-curious attacker (who can be the master or a training worker) can collect the exchanged gradient values and infer the nature of the training dataset of a particular training worker. Further, this work does not consider secure communication to protect gradients against a man-in-the-middle attacker as well as communication cost. In~\cite{Phong2018}, Phong \textit{et al.} revisited the work of Shokri \textit{et al.}~\cite{Shokri2015} in combination with additively homomorphic encryption to protect the gradients against the honest-but-curious master. Hao \textit{et al.}~\cite{Hao2019} has a similar work that considers a larger number of training workers and integrates additively homomorphic encryption with differential privacy to protect data privacy. In~\cite{Tang2019}, Tang \textit{et al.} developed a two-phase re-encryption technique to protect the privacy of gradients exchanged among training workers. The authors adopted a Key Transfer Server (KTS) and a Data Service Provider (DSP) that issue a Diffie Hellman (private) key for each of the training workers to encrypt the gradients before sending them to the master. This allows the training workers to securely communicate with the master without establishing an independent secure channel. 

There also exist several works that used differential privacy to protect the training data and gradients. This approach defends against not only honest-but-curious actors (e.g., training workers and master) but also adversarial attackers that carry out model-inversion attacks. In~\cite{Fredrikson:2015}, Fredrikson \textit{et al.} developed a deferentially private stochastic gradient descent algorithm to protect the weight parameters from the strong adversary with full knowledge of the training mechanism. Later on, a similar technique has been developed in~\cite{Abadi2016}. 
In~\cite{Papernot2016} and~\cite{Papernot2018}, Papernot \textit{et al.} proposed an approach, namely Private Aggregation of Teacher Ensembles (PATE) to provide strong privacy guarantees for training data. This approach first combines multiple models (called \textit{teacher} models) trained with disjoint datasets in a black-box fashion. The teacher models are then used to predict the label of incomplete public data. Finally, the predicted public data is used to train \textit{student} models which will be released to use. In~\cite{Lyu2019}, Lyu \textit{et al.} used blockchain technology to decentralize the learning process where each training worker does not trust any third party or other training workers. In~\cite{Jones2018}, Beaulieu-Jones \textit{et al.} developed an approach for distributed deep learning for clinical data. The authors adopted differential privacy by adding noise to gradients so that no individual patient's data has a significant influence on the global model, thus revealing the nature of the respective training data. Similar to~\cite{phong2019}, this work sequentially trains the model from one private dataset to the other, i.e., the weights obtained on one dataset are transferred to the next training worker to continue on a new dataset. 

Secure Multi-Party Computation (MPC) is an approach in which multiple parties reveal information only about the final result, and not any of the input data. In order to hide the parties' model weight from the server, the encrypted model weight of computing parties is sent to the server. To encrypt the model weight, in the works of Mugunthan \textit{et al.}~\cite{Mugunthan2019SMPAISM} and Byrd \textit{et al.}~\cite{10.1145/3383455.3422562}, every pair of computing parties ($P_{i},P_{j}$) share a secret random value $r_{i,j}$. When communicating with the server, one party will add this value to its model weight and the other party will subtract it from its model weight. To avoid the case of the conclusion of $N-1$ parties where the weight of the remaining party can be easily determined, these works also combine differential privacy with MPC to make the system to be fully private. However, if the server updates the weight by the ratio of the data size of the party, these approaches may not be useful since the party does not know the ratio itself. Compared to differential privacy, MPC increases the communication cost of parties since they have to communicate either by pair or together to share the generated secure information that has been used to encrypt the model before sending it to the server. Recently, Sotthiwat \textit{et al.}~\cite{9499372} proposed a method that reduces this communication cost amongst parties since only the gradients of the first layer are encrypted with MPC strategy, while the rest are sent directly to the server.

\subsection{Federated Learning}
Recently, federated learning~\cite{Yang:2019} has emerged, requiring a novel distributed learning framework that allows the model training to be performed over geographically-distributed (private) datasets. Federated learning allows the data owners to collaboratively train a model without sharing its data with others, thus protecting data privacy. \textit{FederatedAveraging} (FedAvg)~\cite{fedavg2017} is one of the typical algorithms that was designed to train a machine learning model using multiple distributed datasets in a parallel manner. At each training step, a random subset of clients (i.e., training worker or data owner) is chosen and asked to train the model using their own dataset in parallel. Then, the clients send back the model parameters to the server to update the global model using an average formula. However, it does not consider reducing the data communication cost among clients and the server as well as data privacy. Recently, the work of Li \textit{et al.}~\cite{Li_FEDAVG_DP_2020} addresses data privacy by adding the differential privacy technique. At each training step, a client chooses either to update the model parameters or adding noise to the model parameters. Our work differs from Li \textit{et al.}~\cite{Li_FEDAVG_DP_2020}, as the computing nodes (clients) do not always send their model to the server, which may reveal data information. We further add the complexity and non-linearity to model parameters by using different training parameters for the computing nodes, such as learning rate, mini-batch size, epochs, training step, etc.

\section{Federated Deep Learning Framework for Privacy Preservation and Communication Efficiency (FedPC)}
\label{sec:framework}

In this section, we develop our proposed framework, namely \textbf{FedPC} - a \textbf{Fed}erated Deep Learning Framework for \textbf{P}rivacy Preservation and \textbf{C}ommunication Efficiency. First, we present an overview of the framework. Then, we describe the detailed algorithms executed at the master and workers during the training process.

\subsection{Framework Overview}

\begin{figure}
    \centering
    \includegraphics[width=0.48\textwidth]{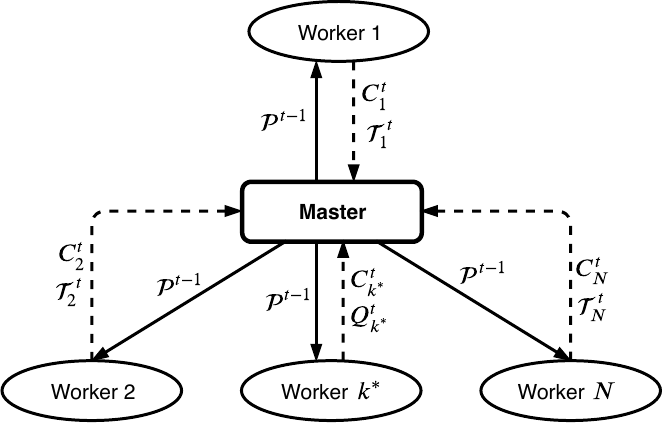}
    \caption{Architecture of FedPC.}
    \label{fig:arch}
\end{figure}

In Figure~\ref{fig:arch}, we present the architecture of the proposed framework, which consists of a master and  $N$ workers. The master is responsible for coordinating the training process: invoking the execution of the training algorithm on the workers, receiving training outcomes (presented in details in the next sections) from the workers, updating the global model instance and notifying the workers after the global model instance has been updated and ready for the next training epoch. We assume that the master maintains a convergence condition such that the global model will converge to the desired instance after a finite number of epochs. We denote this parameter as \texttt{GLOBAL\_EPOCHS}. 

The workers are the data owners equipped with a computing server to run a training algorithm on their private data. Given a model instance downloaded from the master, each worker runs the training algorithm and send training outcomes to the master once completed, using the communication protocol defined by the master. We assume that each worker is responsible for determining the hyper-parameters used in the training algorithm such as learning rate, mini-batch size, number of training epochs and optimizer. These parameters are the private information of each worker and they are kept unknown to the master as well as other workers, thus preventing others from inferring useful information about the training data. We also note that the size of the datasets of the workers is heterogeneous. Thus, with the heterogeneity in hyper-parameters and the size of datasets, the training time will be different for different workers. In this work, we implement a synchronous-parallel training approach such that the master will wait for all the workers to finish their local training before updating the global model instance and invoking a new epoch towards \texttt{GLOBAL\_EPOCHS}. 

\subsection{Master: Analyzing Training Outcomes from Workers and Updating Global Model}
\label{sec:master_update}

We now present the details of the algorithm executed at the master. Without loss of generality, we assume that the algorithm has been run for $t-1$ epochs and we are now at the beginning of the $t$-th epoch. In Algorithm~\ref{alg:master}, we present the steps that the master will execute in an epoch. The input of Algorithm~\ref{alg:master} is the model instance obtained from the previous iteration, denoted as $\mathcal{P}^{t-1}$. We assume that the model $\mathcal{P}$ has $M$ parameters, each being denoted as $P_i$ such that $\mathcal{P}^{t-1}$ is defined as  $\mathcal{P}^{t-1} =\{P_i^{t-1} | i =1\dots M\}$.

At the first step (line~\ref{line:invoke_workertraining}), the master invokes the execution of the training algorithm on each training worker. As discussed in the previous section, each worker has its own training algorithm with private hyper-parameters and optimization methods. Each worker downloads the model instance ($\mathcal{P}^{t-1}$) to its local memory and starts the training algorithm. Due to the heterogeneous size of datasets and hyper-parameters of the training algorithm on different workers, the training on the workers may not complete at the same time. The master needs to synchronize the execution by waiting for all the workers to complete their training before proceeding to the next step. 
\setlength{\textfloatsep}{16pt}
\begin{algorithm}[t]
\caption{\texttt{masterTraining}($\mathcal{P}^{t-1}$)}
\label{alg:master}
\begin{algorithmic}[1]
\State Invoke Alg.~\ref{alg:worker}: \texttt{workerTraining}($\mathcal{P}^{t-1}$) on all the workers\label{line:invoke_workertraining}
\State \texttt{synchronize}() \Comment{wait for all the workers to complete}
\State Receive cost from workers: $\{C_k^t|k=1\dots{}N\}$
\State Compute the goodness of the local instances using Eq.~\eqref{eq:goodness}
\State Receive the local model instance $Q_{k^*}^t$ from worker $k^*$
\State Receive ternary vector $\mathcal{T}_k^t$ of worker $k, k=1\dots{}N, k\neq k^*$
\State Update the global model using Eq.~\eqref{eq:master_update}
\State \Return $\mathcal{P}^t$
\end{algorithmic}
\end{algorithm}

Given that all the workers have finished their training, each worker obtains a local model instance denoted as $\mathcal{Q}_k^t$ ($\mathcal{Q}_k^t = \{Q_{k,i}^t | i = 1\ldots{}M\}$) and evaluates with their training dataset resulting in a cost, denoted as $C_k^t$ where $k=1\dots{}N$. The cost of the model could be the loss function value, reconstruction error, such that the lower the value of the cost, the better the model instance obtained. Instead of sending the local model instance to the master, which may not be as good as the one from other workers and also incurs high communication overhead due to its large size, each worker first sends its respective cost to the master. We define a goodness function to determine the best local model instance and its respective worker based on its cost.
\begin{equation}
\label{eq:goodness}
    G_k^t = \begin{cases}
    S_k\dfrac{1}{C_k^t}& \text{if}\; t=1,  \\
     S_k(C_k^{t-1} - C_k^t)& \text{if} \;t>1
    \end{cases}
\end{equation}
where $S_k$ is the size of the dataset of worker $k$. For instance, $S_k$ can be defined as the number of images in the dataset or the number of rows in a tabular dataset. The rationale behind the goodness function is that we consider the correlation between the cost and size of the dataset of each worker as a low cost could be resulted by a small dataset. At the first epoch, the worker with the lowest cost per data sample ($C_k^t/S_k$) is selected. Thus the goodness value of each model instance obtained from a worker after the first epoch is defined as the inverse of the per-unit cost.  From the second epoch onward, the progress in the training will also be taken into account. Since the master has the information of the cost from the previous epoch, it can evaluate the training progress by computing the reduction in cost. A worker with a large number of data samples and a large reduction in cost will be obviously selected. However, if a worker with less data but still has a large reduction in training cost, it can be also a good candidate to update the global model instance. The worker with the highest value of the goodness function, denoted as $G_{k^*}^t$ is selected as the pilot worker to send its local model instance denoted as $\mathcal{Q}_{k^*}^t$ to the master for updating the global model instance.  

We note that the master will not only use the local model instance $\mathcal{Q}_{k*}^t$ of worker $k^*$ to update the global model instance but also the training outcomes from other workers. Due to the privacy concern, those unselected workers will not send the entire model instance to the master but only the evolution direction of each parameter of the model in the form of a ternary vector. For worker $k$, the ternary vector obtained at epoch $t$ denoted as $\mathcal{T}_k^t$ is defined as follows:
\begin{equation}
    \mathcal{T}_k^t = \{T_{k,i}^t\}, \; i = 1\ldots{}M,\; k= 1\ldots{}N
\end{equation}
where $T_{k,i}^t \in \{-1, 0, 1\}$ indicates the change of parameter $P_i$ of the model produced by worker $k$ at epoch $t$ in correlation with that of epoch $t-1$. Let $Q_{k,i}^t$ denote the value of parameter $P_i$ obtained by worker $k$ at epoch $t$. The value of ternary vector means as follows: 
\begin{itemize}
    \item $T_{k,i}^t = -1$: If the change direction of parameter $P_i$ obtained by worker $k$ at epoch $t$ is different with that at epoch $t-1$. In other words, $T_{k,i}^t = -1$ if parameter $P_i$ increased at epoch $t-1$ and decreases at epoch $t$ (i.e., $Q_{k,i}^{t-2}<Q_{k,i}^{t-1}$ and $Q_{k,i}^{t-1}>Q_{k,i}^{t}$) or vice versa (i.e., decreased at epoch $t-1$ and increases at epoch $t$).   
    \item $T_{k,i}^t = 0$: If parameter $P_i$ does not significantly change for two consecutive epochs.
    \item $T_{k,i}^t = 1$: If parameter $P_i$ obtained by worker $k$ keeps significantly increasing or decreasing the same as at epoch $t-1$. In this case, $Q_{k,i}^t > Q_{k,i}^{t-1} > Q_{k,i}^{t-2}$ or $Q_{k,i}^t < Q_{k,i}^{t-1} < Q_{k,i}^{t-2}$ should hold. The former condition means that the parameter keeps increasing in two consecutive epochs while the later means that the parameter keeps decreasing in two consecutive epochs.
\end{itemize}

We note that computation of this ternary vector is performed by the workers, each having a different learning rate to determine whether or not parameter $P_i$ has significantly changed and at with direction it has changed. We will describe more detail in the next section. 

Given the local model instance received from worker $k^*$ and ternary vectors from the workers other than worker $k^*$, the master updates the global model instance as follows:
\begin{equation}
\label{eq:master_update}
    P_{i}^{t} = \begin{cases}
    Q^t_{k^*,i} - \alpha_0 \displaystyle\sum_{k \neq k^*} p_{k}T_{k,i}^t{} &\text{if}\; t = 1, \\
    Q^t_{k^*,i} - \displaystyle\sum_{k \neq k^*} p_k\beta_k{}T_{k,i}^t(P_i^{t-1} - P_i^{t-2}) &\text{if}\; t > 1
    \end{cases}
\end{equation}
where $\alpha_0$ is the learning rate of the master, $S_{k}$ is the data size of worker $k$, $S$ is the total data size of all workers, $S=\sum_{k=1}^NS_k$, $p_k =S_k/S$ (the proportion of data size of worker $k$), and $\beta_k$ is a parameter that the master synchronizes with worker $k$ to determine whether a model parameter has changed significantly on the dataset owned by worker $k$. Without loss of generality, the master can set the same value for all $\beta_k$'s. After updating all model parameters at epoch $t$, the master notifies all the workers to download the newly-obtained model instance ($\mathcal{P}^t$) and invokes a new training epoch ($t+1$). 

\subsection{Workers: Training Model and Ternarizing Evolution of Model Parameters}

In this section, we present the algorithm run at the workers, which train the model on their respective dataset, compute the ternary vector and update the master. As all the workers execute the same algorithm, without loss of generality, we present the algorithm for worker $k$. Algorithm~\ref{alg:worker} describes the pseudo-code of the algorithm. The input of the algorithm is the model instance received from the master after epoch $t-1$ and the outputs of the algorithm are a local model instance, cost value, and the ternary vector. 

\begin{algorithm}[t]
\caption{\texttt{workerTraining}($\mathcal{P}^{t-1}$)}
\label{alg:worker}
\begin{algorithmic}[1]
\State $\{C_k^t, \mathcal{Q}_k^t\}\leftarrow$ \texttt{trainModel}($\mathcal{P}^{t-1}$)
\State Send cost $C_k^t$ to the master
\State \texttt{CMD}$\leftarrow$\texttt{getMasterCommand}()
\If {\texttt{CMD}==\texttt{SEND\_MODEL}}
\State \Return $\mathcal{Q}_k^t$ \Comment{Send model instance $\mathcal{Q}_k^t$ to the master}\label{line:sendmodel}
\Else
\If{\texttt{CMD}==\texttt{SEND\_TERNARY}}
\State Compute ternary vector using Eq.~\eqref{eq:ternary1} or Eq.~\eqref{eq:ternary2}
\State \Return $\mathcal{T}_k^t$ \Comment{Send ternary vector to the master}
 \EndIf   
\EndIf 
\end{algorithmic}
\end{algorithm}

The worker procedure starts by running the training algorithm to obtain a local model instance and the training cost applied to the dataset owned by the worker. After the training completes, the worker sends the cost to the master for goodness evaluation and waits for the command to determine the next action. With the cost from all the workers, the master evaluates the goodness of the model instance obtained by each worker as described in the previous section. If the model instance obtained by worker $k$ is the best, it will send the model instance to the master and complete the algorithm (line~\ref{line:sendmodel} in Algorithm~\ref{alg:worker}). 

If the master requests the model instance from a different worker, worker $k$ has to compute the ternary vector and sends it to the master. At the first epoch ($t=1$), the ternary value $T_{k,i}^t$ of parameter $P_i$ obtained by worker $k$ is defined as follows: 

\begin{equation}
\label{eq:ternary1}
    T_{k,i}^{t} = 
    \begin{cases}
    -1 &\text{if}\; Q^{t}_{k,i} - P_i^0 < -\alpha_{k}, \\
    0 &\text{if}\; |Q^{t}_{k,i} - P_i^{0}| \leqslant \alpha_{k}, \\
    1 &\text{if}\; Q^{t}_{k,i} - P_i^0 >\alpha_{k} \\
    \end{cases}
\end{equation}
where $\mathcal{P}^{0}=\{P_i^{0} | i =1\dots M\}$ is a model instance randomly initialized by the master for the first epoch, and $\alpha_k, k =1\ldots{}N$ is the learning rate of worker $k$. The rationale behind this equation is that at the first epoch, the workers do not have the evolution history of each model parameter. The workers can only compute the ternary vector based on their learning experience through the learning rate. If a parameter significantly decreases compared to its initialized value, the ternary value will be $-1$. If the change in the parameter value is not significant $|Q^{t}_{k,i} - P_i^{0}| \leqslant \alpha_{k}$, we can say that the parameter did not change. If the parameter significantly increases, the ternary value will be set to $1$.

From the second epoch onward ($t\geqslant2$), we define the ternary vector based on the evolution history of the model. We assume that the workers keep a copy of the model instance received from the master in epoch $t-1$ and epoch $t-2$, denoted as $\mathcal{P}^{t-1}$ and $\mathcal{P}^{t-2}$, respectively. The ternary value $T_{k,i}^t$ of parameter $P_i$ obtained at worker $k$ and at epoch $t$ is defined as follows:
\begin{equation}
\label{eq:ternary2}
T_{k,i}^{t} = 
    \begin{cases}
    0 &\text{if}\; |Q^{t}_{k,i} - P_i^{t-1}| < \beta_{k}|P_i^{t-1} - P_i^{t-2}|, \\
    \texttt{sign}(f)  &\text{otherwise} \\
    \end{cases}
\end{equation}
where $f = (Q^{t}_{k,i} - P_i^{t-1})(P_i^{t-1} - P_i^{t-2})$ is the production of the changes at two previous epochs. 

As explained in the previous section, $T_{k,i}^t = -1$ if parameter $P_i$ has changed in different directions at two previous epochs. $T_{k,i}^t = 0$ if parameter $P_i$ has not significantly changed at epoch $t$ compared to that at epoch $t-1$. $T_{k,i}^t = 1$ if parameter $P_i$ has significantly changed in the same direction for two consecutive epochs. It is worth recalling that $\beta_k$ is a parameter that the master provides to worker $k$ to determine whether  or not a model parameter has significantly changed. $\beta_k$'s can be set to a number in the range $(0,1)$, (e.g., $0.2$).  

By using a ternary vector with values that belong to $\{-1,0,1\}$ to represent the evolution of the model parameters from the workers, we can reduce the communication cost from the workers to the master by up to $32\times$, compared to the case of sending the entire model when using a $64$-bit number for each parameter. This is because we can represent these three values by $2$ bits (e.g., $00$, $01$ and $11$). Thus, we can compress $4$ ternary values into $1$ Byte. Even using a $32$-bit number for each, we also reduce the communication cost by up to $16\times$.
 
\section{Convergence and Privacy Analysis}
\label{sec:theory_analysis}

\subsection{Convergence Analysis}
In this section, we present a formal proof that our proposed model updating approach as presented in Eq.~\eqref{eq:master_update}, Section~\ref{sec:master_update} guarantees the convergence of learning models.

\begin{thm}[Convergence of learning models] 
\label{thm:convergence}
By applying the updating approach presented in Eq.~\eqref{eq:master_update}, Section~\ref{sec:master_update} for model parameters, a learning model trained by FedPC converges to the optimal instance after a sufficiently large number of epochs. 
\end{thm}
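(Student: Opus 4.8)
The plan is to show that the master update in Eq.~\eqref{eq:master_update} is a \emph{vanishing, summable perturbation} of the local (stochastic) gradient step carried out by the pilot worker $k^*$, so that the global iterates $\{\mathcal{P}^t\}$ inherit the convergence of the underlying worker training procedure. Throughout I would adopt the assumptions under which worker training (SGD and its variants) is known to converge: each per-worker loss is $L$-smooth and admits a unique stationary point, this stationary point coincides across workers (which is exactly the paper's standing assumption that all private datasets are stationary and identically distributed, so ``the optimal instance'' $\mathcal{P}^*=\{P_i^*\}$ is well defined), the stochastic gradients have bounded variance, and each worker uses a step-size schedule $\{\eta_t\}$ with $\sum_t\eta_t=\infty$ and $\sum_t\eta_t^2<\infty$. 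Write $g_i^t := Q_{k^*,i}^t - P_i^{t-1}$ for the displacement produced by the pilot worker's local training in epoch $t$; under these assumptions $\|g^t\|\to 0$ and in fact $\sum_t\|g^t\|^2<\infty$.

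First I would put the update for $t>1$ in increment form. Subtracting $P_i^{t-1}$ from both sides of Eq.~\eqref{eq:master_update} gives
\[
P_i^t - P_i^{t-1} \;=\; g_i^t \;-\; \sum_{k\neq k^*} p_k\beta_k\, T_{k,i}^t\,(P_i^{t-1}-P_i^{t-2}).
\]
Since $T_{k,i}^t\in\{-1,0,1\}$, $p_k\ge 0$, $\sum_k p_k = 1$ and each $\beta_k\in(0,1)$, the coefficient multiplying the previous increment obeys $\bigl|\sum_{k\neq k^*}p_k\beta_k T_{k,i}^t\bigr|\le\gamma:=\max_k\beta_k<1$ (indeed $\le\gamma(1-p_{k^*})$). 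Setting $a_t:=\|\mathcal{P}^t-\mathcal{P}^{t-1}\|_\infty$ this yields the scalar recursion $a_t\le\|g^t\|_\infty+\gamma a_{t-1}$, hence $a_t\le\gamma^{t-1}a_1+\sum_{s=2}^t\gamma^{t-s}\|g^s\|_\infty$. As $\gamma<1$ and $\|g^s\|\to0$, the right-hand side tends to $0$; summing the geometric kernel also gives $\sum_t a_t\le\frac{1}{1-\gamma}\bigl(a_1+\sum_t\|g^t\|_\infty\bigr)$, so whenever the worker displacements are summable the global increments are summable too and $\{\mathcal{P}^t\}$ is Cauchy.

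Next I would bound the correction term itself: by the same inequality, $\bigl\|\sum_{k\neq k^*}p_k\beta_k T_{k}^t\odot(\mathcal{P}^{t-1}-\mathcal{P}^{t-2})\bigr\|_\infty\le\gamma\,a_{t-1}\to0$, and this quantity is summable over $t$. Thus Eq.~\eqref{eq:master_update} reads $\mathcal{P}^t=\mathcal{Q}_{k^*}^t+\varepsilon^t$ with $\|\varepsilon^t\|\to0$ and $\sum_t\|\varepsilon^t\|<\infty$, i.e.\ the global trajectory is a summable perturbation of the sequence of pilot-worker outputs, which is itself a (multi-step) stochastic gradient trajectory initialised at $\mathcal{P}^{t-1}$. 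Substituting back, the global iteration is stochastic gradient descent on the common loss plus a summable vanishing perturbation, and a standard argument (the Robbins--Siegmund almost-supermartingale lemma, or the perturbed-SGD convergence results) shows such a perturbation does not destroy convergence; hence $\mathcal{P}^t\to\mathcal{P}^*$. The first epoch ($t=1$) contributes only the bounded one-off shift $\alpha_0\sum_{k\neq k^*}p_k T_{k,i}^1$, which is absorbed into the initial condition and is irrelevant to the asymptotics.

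The main obstacle I anticipate is the coupling introduced by the fact that the correction term is driven by the \emph{global} increments $\mathcal{P}^{t-1}-\mathcal{P}^{t-2}$ rather than by fresh worker gradients: treated carelessly, this ``momentum-like'' term could inject energy and prevent convergence. The resolution, and the crux of the argument, is exactly the observation above that the ternary values are bounded by $1$ while the weights $p_k\beta_k$ sum to something strictly below $1$, which turns the increment recursion into a strict contraction forced by a vanishing input. Two technical points then need care: (i) the ternary map $T_{k,i}^t$ is a discontinuous function of the parameters, so the proof must use only the crude bound $|T_{k,i}^t|\le1$ and never any continuity or descent property of $T$; and (ii) because the pilot index $k^*$ changes from epoch to epoch, the i.i.d.\ assumption on the private datasets is genuinely needed so that every worker's stationary point is the same $\mathcal{P}^*$ — otherwise the statement ``converges to the optimal instance'' would not even be well posed.
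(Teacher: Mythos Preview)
Your approach is quite different from the paper's. The paper does not treat the ternary correction as a perturbation to be controlled; instead it argues that Eq.~\eqref{eq:master_update} is an \emph{approximation} to the standard distributed gradient-averaging update $\mathcal{P}^t=\mathcal{P}^{t-1}-\sum_k p_k\beta_k\mathcal{G}_k^{t-1}$, whose convergence is imported from Zinkevich et al. It isolates the pilot term to obtain $\mathcal{P}^t=\mathcal{Q}_{k^*}^t-\sum_{k\neq k^*}p_k\beta_k\mathcal{G}_k^{t-1}$, shows that the goodness function selects the worker with the largest weighted gradient norm, and then replaces each remaining $\mathcal{G}_k^{t-1}$ by the surrogate $T_{k,i}^t(P_i^{t-1}-P_i^{t-2})$, justified by a threshold/sign argument and a two-dimensional picture indicating that the surrogate still pushes the iterate toward $\mathcal{P}^*$. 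The paper's proof is thus a heuristic plausibility argument rather than a quantitative one; your contraction-in-increments idea is considerably more analytical and, if it closed, would be a cleaner route.

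The step that does not close is the summability claim. Your recursion $a_t\le\gamma a_{t-1}+\|g^t\|_\infty$ with $\gamma<1$ yields $a_t\to0$ and (since $\sum_t\|g^t\|^2<\infty$) also $\sum_t a_t^2<\infty$; it does \emph{not} yield $\sum_t a_t<\infty$ unless $\sum_t\|g^t\|<\infty$. But under the very Robbins--Monro schedule you invoke, $\|g^t\|$ is of order $\eta_t$ (it is an accumulated gradient displacement with step size $\eta_t$), so $\sum_t\|g^t\|$ diverges together with $\sum_t\eta_t=\infty$. Consequently your perturbation satisfies only $\|\varepsilon^t\|\le\gamma a_{t-1}=O(\eta_t)$: it is vanishing but not summable, and in fact of the \emph{same order} as the gradient step itself, i.e.\ a persistent $O(1)$ relative bias. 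The Robbins--Siegmund / perturbed-SGD results you appeal to require the bias to be $o(\eta_t)$ or $\ell^1$-summable; an $O(\eta_t)$ bias generically shifts the limit away from $\mathcal{P}^*$. To salvage the argument you would need either a structural reason why the ternary correction is asymptotically mean-zero (so that it acts as additional noise rather than bias), or a direct descent inequality for the FEDF update that does not pass through an $\ell^1$-summable-perturbation reduction.
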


\begin{proof}
We refer the reader to \ref{sec:proofthm1} for a formal proof of the theorem.
\end{proof}

\subsection{Privacy Analysis}

Privacy is one of the most important properties of federated learning. In this section, we prove that our framework guarantees data privacy and does not reveal any useful information even for intermediate computational results to different parties participating in the training process and external attackers. The privacy analysis in this paper is partially based on the privacy aspects used in ~\cite{phong2019}.

\subsubsection{Threat model}
In this work, we consider two attacking scenarios coming from both insider attacks and external attacks.
\begin{itemize}
    \item Insider attacks: We assume that all the master and workers participating in the training process are honest-but-curious. If any information from the training process is leaked, they will exploit and infer useful information for their purpose. We also assume that the workers do not trust each other and want to protect their training data against other data owners. In an extreme case, we consider an attack scenario where maximum $N-2$ workers are malicious and collude together to attack a particular worker where $N$ is the total number of workers participating in the training process.
    \item External attacks: We consider a man-in-the-middle attack scenario where an attacker sniffs the information exchanged among the master and workers during the training process and infer the useful information. We assume that on each system of the master and workers, there will be existing security solutions that prevent illegal intrusion to steal model or data.
\end{itemize}

We note that while we consider an honest-but-curious threat model in this work, it can happen that in reality, a compromised worker can choose to not follow the protocol and send adversarial information to the master. This leads to the fact that the learning model is no longer trustful and it can result in wrong predictions. However, this does not affect data privacy, i.e., the privacy of the training data, which is the main focus of our work. On one hand, the attacker at the compromised worker still cannot learn anything about the training data from other workers. On the other hand, the attacker cannot obtain a better learning model compared to the model trained on its local data, taking the advantage of the large training data from other workers, thus defeating the purpose of federated learning. It is also worth mentioning that addressing this challenging threat deserves a separate work. A possible solution is to adopt the Byzantine problem that has been extensively studied in networking~\cite{9039724}. With the Byzantine approach, the master (which is a trustful party) can verify the information received from workers before updating the model. Another possible approach is to randomly drop out several workers at each training epoch so as to avoid bias to a particular worker.

\subsubsection{Analysis}

\begin{thm}[Privacy against the honest-but-curious master] 
\label{thm:against_master}
By applying mini-batch gradient descent with private training parameters on the workers and the proposed approach for updating the model parameters at the master, an honest-but-curious master has no information on the private dataset of the workers, unless it solves a non-linear equation.
\end{thm}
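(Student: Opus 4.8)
The plan is to characterize exactly what an honest-but-curious master observes in each epoch and to show that turning that observation into any statement about a worker's private dataset requires solving a non-linear system. In epoch $t$ the master sees only: the scalar costs $C_k^t$ for $k=1\dots N$; the full real-valued local instance $\mathcal{Q}_{k^*}^t$ of the single pilot worker $k^*$ picked by the goodness function of Eq.~\eqref{eq:goodness}; and, for every other worker, a ternary vector $\mathcal{T}_k^t\in\{-1,0,1\}^M$ produced by Eq.~\eqref{eq:ternary1} or Eq.~\eqref{eq:ternary2}. (It also learns the cardinalities $S_k$, which say nothing about the samples themselves.) Note that the master's own update rule Eq.~\eqref{eq:master_update} only recombines these already-received quantities, so it creates no new leakage; it therefore suffices to analyze what is received. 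I would treat the pilot worker as the worst case, since it is the only party whose continuous parameters are exposed; the guarantee for the non-pilot workers then follows \emph{a fortiori}, their ternary vectors being strictly less informative.

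For the pilot worker I would write the explicit relation linking the master's view to the private data. Worker $k^*$ starts from the public instance $\mathcal{P}^{t-1}$ and runs mini-batch gradient descent for $E_{k^*}$ private local epochs over mini-batches $B_{k^*,1},B_{k^*,2},\dots$ drawn from its dataset $D_{k^*}$, with a private learning rate $\eta_{k^*}$ and possibly a private optimizer state, yielding
\[
\mathcal{Q}_{k^*}^t \;=\; \mathcal{P}^{t-1} \;-\; \eta_{k^*}\sum_{e=1}^{E_{k^*}}\sum_{j}\nabla L\!\left(\mathcal{W}_{e,j};\,B_{k^*,j}\right),
\]
where the intermediate iterates $\mathcal{W}_{e,j}$ are unobserved. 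The master can form the net displacement $\mathcal{Q}_{k^*}^t-\mathcal{P}^{t-1}$, but to extract even one sample of $D_{k^*}$ it must invert this identity. I would then single out the three sources of non-linearity and indeterminacy: (i) for any non-trivial network $\nabla L(\cdot;\cdot)$ is a non-linear function of both the weights and the data (a composition of affine maps with non-linear activations, softmax, pooling, and the like), so each summand is non-linear in the unknowns; (ii) the number of summands is itself unknown, since $E_{k^*}$ and the mini-batch size are private; and (iii) $\eta_{k^*}$, the mini-batch partition, and any optimizer state are unknown. Hence the master faces one vector equation whose right-hand side is non-linear in far more unknowns (the $S_{k^*}$ high-dimensional samples plus the hyper-parameters) than it has components --- a non-linear, underdetermined system whose solution set does not single out $D_{k^*}$. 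This proves the claim for $k^*$ and exhibits precisely the ``non-linear equation'' in the statement.

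For a non-pilot worker $k$ the master sees only $\mathcal{T}_k^t$. By Eq.~\eqref{eq:ternary1}/\eqref{eq:ternary2} this vector is obtained from the displacement $Q_{k,i}^t-P_i^{t-1}$ --- itself the output of the same private mini-batch process analyzed above --- by thresholding and then taking signs; both steps are many-to-one and discard magnitudes entirely. Thus the master's view of worker $k$ factors through a strictly less informative map than in the pilot case, and recovering $D_k$ would demand the same non-linear, underdetermined inversion, now further obscured by the loss of all magnitude information. Likewise the cost $C_k^t=L(\mathcal{Q}_k^t;D_k)$ is a single scalar, so reconstructing $D_k$ from it means inverting the loss functional over an entire dataset --- again a non-linear scalar equation in a high-dimensional unknown.

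I expect the real difficulty to be pinning down what ``has no information'' should mean and why non-linearity alone is enough to block inference. The honest route is the one the statement already hints at: rather than an information-theoretic zero-leakage claim, exhibit explicitly the non-linear equation(s) the master would have to solve and argue that they are (a) non-linear in the data and (b) underdetermined, so their fibers are high-dimensional and the true dataset is not identifiable from the master's view. Making this precise amounts to showing that the map taking a worker's private data and private hyper-parameters to the master's view is non-injective with high-dimensional preimages; if a stronger statement is wanted one may add the remark that no closed-form or efficient inversion of such a system is known. That framing keeps the proof self-contained and matches the qualified form of the theorem.
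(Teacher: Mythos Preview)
Your proposal is correct and in fact more explicit than the paper's proof about \emph{what} the master sees (costs, one real-valued local instance, $N-1$ ternary vectors) and why each of these channels leads to a non-linear, underdetermined inversion. The paper's argument runs along a somewhat different axis: it writes the same mini-batch update $Q_{k,i}^{t-1}-Q_{k,i}^t=\alpha_k(G_1+\cdots+G_n)$ with private $\alpha_k$ and $n$, but then frames the attack as an \emph{accumulation} problem---the master would need the pilot worker's full model for many consecutive epochs (at least $2(n+1)$, with $n$ itself unknown) to assemble a solvable system. The paper then invokes the goodness function of Eq.~\eqref{eq:goodness} as the structural defense: because the pilot worker $k^*$ changes from epoch to epoch, the master cannot consecutively harvest $\mathcal{Q}_k^t$ from a fixed $k$, so the required system of equations never materializes. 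Your approach instead argues that even a \emph{single} observed displacement $\mathcal{Q}_{k^*}^t-\mathcal{P}^{t-1}$ is already a non-linear, underdetermined equation in the unknown samples and hyper-parameters, and that ternary vectors and scalar costs are strictly coarser. That is arguably the stronger and cleaner line, since it does not rely on any assumption about how often a given worker is selected as pilot; but it does not make use of the clause ``and the proposed approach for updating the model parameters at the master'' in the theorem statement, which in the paper's proof is what the goodness-function argument discharges. If you want your proof to match the theorem as phrased, add a sentence noting that the rotating pilot selection further limits the number of real-valued observations the master can collect from any single worker.
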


\begin{proof}
Assuming that the workers use mini-batch gradient descent to optimize the local model on its dataset, Phong and Phuong have successfully proven part of this theorem in~\cite{phong2019}. For completeness, we briefly present below.   

Given $n$ as the number of batches of the dataset of worker $k$ and $n \gg 1$, a model parameter is updated after each epoch as follows:
\begin{equation}
    Q_{k,i}^t = Q_{k,i}^{t-1} - \alpha_k (G_{1} + ... +G_{n}).
\end{equation}
Since batch size (and thus number of batches, $n$) and learning rate $\alpha_k$ are private information of worker $k$ and unknown to the master, individual gradient on each mini-batch is therefore also unknown to the master. Nevertheless, the master can compute the sum of gradients if the master has the model parameters for two consecutive epochs.
\begin{equation}
    Q_{k,i}^{t-1} - Q_{k,i}^{t} = \alpha_k(G_{1} + ... +G_{n}).
    \label{eq:grad-sum}
\end{equation}  

Recovering any data item of the dataset of worker $k$ from equation \eqref{eq:grad-sum} is the task of solving a non-linear equation. In other words, This is a subset sum problem.
This proves the theorem.
\end{proof}

\begin{thm}[Privacy against honest-but-curious workers] 
\label{thm:against_collusion}
With the proposed framework, FedPC, an honest-but-curious worker has no information on the private dataset of other workers.
\end{thm}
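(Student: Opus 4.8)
The plan is to characterize exactly what a curious worker — or a coalition of up to $N-2$ of them — can observe, reduce the reconstruction question to an algebraic-inversion problem of the same flavour as in the proof of Theorem~\ref{thm:against_master}, and argue that the problem faced here is strictly harder than the one faced by the master, hence unsolvable under the same standing assumptions. First I would pin down the \emph{view} of an honest-but-curious worker $k$: in FEDF a worker never exchanges any message with another worker, so the only external data it receives is the sequence of global instances $\mathcal{P}^{0},\mathcal{P}^{1},\dots$ broadcast by the master; its total side information about a worker $j\neq k$ is therefore contained in $\{\mathcal{P}^t\}_t$ together with $k$'s own data, costs and ternary vectors. By Eq.~\eqref{eq:master_update}, for $t>1$ and each coordinate $i$ the broadcast trace gives the single scalar
\[
P_i^t = Q_{k^*,i}^t - \sum_{k'\neq k^*} p_{k'}\beta_{k'} T_{k',i}^t\,(P_i^{t-1}-P_i^{t-2}),
\]
i.e.\ a weighted mix of the pilot's full parameter and the ternary directions of all non-pilot workers. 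Worker $k$ knows neither the data-size proportions $p_{k'}$ of the others, nor — beyond ``it was not me'' in epochs where it was not queried — the pilot identity $k^*$, nor any $T_{k',i}^t$ with $k'\neq k$; counting unknowns against equations already shows the single-worker case is an underdetermined system, which settles it.

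For the harder, colluding case, let a coalition $\mathcal{C}$ with $|\mathcal{C}|\le N-2$ try to reconstruct the dataset of a victim $v\notin\mathcal{C}$; by the threat model, besides the victim there is at least one further honest worker $w\notin\mathcal{C}\cup\{v\}$. Pooling their private information, the coalition still only knows $\{T_{k',i}^t : k'\in\mathcal{C}\}$ and the trace $\{\mathcal{P}^t\}$. In the displayed update, exactly one of three things happens at epoch $t$: (a) $k^*\in\mathcal{C}$, in which case the coalition learns nothing new about $v$; (b) $k^*=w$, in which case the victim's term $p_v\beta_v T_{v,i}^t(P_i^{t-1}-P_i^{t-2})$ is entangled with both the unknown $Q_{w,i}^t$ and the unknown coefficient $p_v$; (c) $k^*=v$, in which case $Q_{v,i}^t$ does appear but is symmetrically entangled with the unknown honest term $p_w\beta_w T_{w,i}^t(P_i^{t-1}-P_i^{t-2})$. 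Thus in every epoch at least one non-coalition unknown survives in each coordinate equation, so no combination of equations — even across epochs, where which worker is the pilot varies — lets the coalition algebraically isolate $\mathcal{Q}_v^t$ or $\mathcal{T}_v^t$. This is precisely why the threshold is $N-2$ rather than $N-1$.

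Finally I would close the loop by reduction to Theorem~\ref{thm:against_master}. Even granting the coalition the isolated ternary vector $\mathcal{T}_v^t$ — a strictly stronger adversary than the protocol permits — the maps in Eq.~\eqref{eq:ternary1} and Eq.~\eqref{eq:ternary2} retain only the \emph{sign} of each parameter change against the private threshold $\alpha_v$ (resp.\ $\beta_v|P_i^{t-1}-P_i^{t-2}|$), so inverting them to recover $\mathcal{Q}_v^t$ is massively underdetermined; and from $\mathcal{Q}_v$ onward the coalition faces exactly the non-linear system in the victim's private learning rate and batch count already shown to be unsolvable for the \emph{better-informed} master in the proof of Theorem~\ref{thm:against_master}. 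Hence a curious worker, or any coalition of at most $N-2$ workers, obtains no information on the private dataset of another worker. The step I expect to be the main obstacle is the case analysis above: making rigorous the claim that the pilot-selection rule together with the $p_{k'}\beta_{k'}$ weighting genuinely prevents the coalition from peeling off the victim's contribution in \emph{every} epoch, so that no clever cross-epoch combination ever eliminates all non-coalition unknowns at once; as in the earlier theorems, ``no information'' here is the weak, computational notion (``unless one solves an underdetermined or non-linear system'') rather than an information-theoretic one, and stating the hypotheses under which even that guarantee is clean is the delicate part.
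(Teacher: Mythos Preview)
Your proposal is broadly sound but differs from the paper in both scope and strategy.

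\textbf{Scope.} Theorem~\ref{thm:against_collusion} in the paper concerns only a \emph{single} honest-but-curious worker; the coalition of up to $N-2$ workers is stated and proved as a separate theorem immediately afterwards. The middle block of your proposal --- the case analysis (a)/(b)/(c) on the pilot identity for a coalition $\mathcal{C}$ --- therefore belongs to the next theorem, not this one. For the present statement you only need your first paragraph (the single-worker view) together with the final reduction to Theorem~\ref{thm:against_master}.

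\textbf{Strategy.} For the single-worker case the paper's argument is much shorter and is structured as an ``only attack path fails'' argument rather than an algebraic-underdetermination argument. Concretely, the paper observes via Eq.~\eqref{eq:master_update} that a worker $k$ could recover information about worker $k^*$ only if two conditions hold simultaneously: (i) $k$ knows that the master has selected $k^*$ as pilot for sufficiently many epochs, and (ii) in those epochs every other worker's ternary vector is all zeros, so that $\mathcal{P}^t=\mathcal{Q}_{k^*}^t$ exactly. Granting both, $k$ would then face the same non-linear system as the master in Theorem~\ref{thm:against_master}. The paper closes by asserting that, with honest master and workers and datasets of comparable importance, the ternary vectors genuinely evolve and are not identically zero, so the preconditions never arise.

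Your route --- pinning down the worker's view, counting unknowns ($k^*$, $p_{k'}$, $T_{k',i}^t$) against the single broadcast equation per coordinate, and then reducing to Theorem~\ref{thm:against_master} --- is a different decomposition. It buys you a cleaner statement of why the system is underdetermined without leaning on the informal ``ternary vectors evolve'' assumption; the paper's version buys brevity and makes explicit what a successful attack would have to look like. Both land on the same weak, computational reading of ``no information'' that you flag at the end.
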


\begin{proof}
Let us refer to Eq.~\eqref{eq:master_update} that is used by the master to update the model parameters after each epoch. Worker $k$ can infer the dataset of work $k^*$ if and only if the two following scenarios happen at the same time:
\begin{enumerate}
    \item Worker $k$ knows that the master has requested worker $k^*$ to send its local model instance for a sufficiently large number of epochs; and
    \item While worker $k^*$ sends its local model instance to the master, all other workers send the all-zeros-ternary vectors to the master. This makes the global model instance at the master be exactly the local model instance obtained at worker $k^*$. 
\end{enumerate}
Assuming that the above conditions hold, worker $k$ can infer the private dataset of worker $k^*$ if worker $k$ has the capability of solving a non-linear equation system as discussed previously. However, this assumption could not happen in practice when the master and workers are honest. Given the equal importance of the datasets owned by different workers, the ternary vector provided by the workers evolves during the training process. This proves the theorem. 
\end{proof} 

\begin{thm}[Privacy against a collusion of up to $N-2$ workers] 
The proposed framework, FedPC, preserves the privacy of the private dataset of a particular worker against collusion of up to $N-2$ workers.
\end{thm}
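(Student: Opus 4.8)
Let worker $v$ be the victim whose private dataset we wish to protect, and let $\mathcal{C}$ denote the set of colluding workers with $|\mathcal{C}|\le N-2$. The plan is to show that the combined view of $\mathcal{C}$ is no more informative about $v$'s dataset than the view of the ``almost-everyone'' collusion already handled in Theorem~\ref{thm:against_collusion}, so that the claim follows a fortiori. First I would record that, since $v\notin\mathcal{C}$ and $|\mathcal{C}|\le N-2$, there is at least one further honest worker $h\notin\mathcal{C}\cup\{v\}$. The pooled knowledge of $\mathcal{C}$ consists of the colluders' own datasets and private hyper-parameters $(\alpha_k,\beta_k)$, their own local instances $\mathcal{Q}_k^t$ and ternary vectors $\mathcal{T}_k^t$, and the sequence of global instances $\mathcal{P}^0,\mathcal{P}^1,\dots$ broadcast by the master; in particular, neither the pilot instance $\mathcal{Q}_{k^*}^t$ nor the ternary vector $\mathcal{T}_h^t$ of the honest worker $h$ is ever delivered to $\mathcal{C}$ (all worker messages are addressed to the master and the channels are TLS-protected against the man-in-the-middle of the threat model).

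Next I would case-split on the role of $v$. If $v$ is never selected as the pilot over the observed epochs, then $v$ influences $\mathcal{P}^t$ only through the single term $p_v\beta_v T_{v,i}^t\big(P_i^{t-1}-P_i^{t-2}\big)$ of Eq.~\eqref{eq:master_update}. To isolate $T_{v,i}^t$ from the broadcast update, $\mathcal{C}$ would have to know both $\mathcal{Q}_{k^*}^t$ and, because $h\neq v$ is also not the pilot, the analogous term $p_h\beta_h T_{h,i}^t\big(P_i^{t-1}-P_i^{t-2}\big)$ contributed by $h$ — and $\mathcal{T}_h^t$ is exactly the piece that never reaches $\mathcal{C}$. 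Hence every global update seen by $\mathcal{C}$ carries an irreducible unknown. Even granting $\mathcal{C}$ the hypothetical ability to remove it, $T_{v,i}^t\in\{-1,0,1\}$ is a deliberately lossy, threshold-quantized function of $v$'s parameter evolution (Eq.~\eqref{eq:ternary1} and Eq.~\eqref{eq:ternary2} with the private thresholds $\alpha_v,\beta_v$): recovering $v$'s per-batch gradients, and hence any statistic of $v$'s data, from one trit per parameter per epoch is a massively under-determined non-linear inverse problem, of the same nature as the system in Eq.~\eqref{eq:grad-sum}.

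If instead $v$ is the pilot for some epochs, it sends $\mathcal{Q}_v^t$ only to the master; $\mathcal{C}$ still observes only $\mathcal{P}^t$, which by Eq.~\eqref{eq:master_update} differs from $\mathcal{Q}_v^t$ by the aggregate of the other workers' ternary contributions, again containing the un-removable $h$-term. Moreover, by the goodness-function argument of Theorems~\ref{thm:against_master} and~\ref{thm:against_collusion}, and because the datasets of $v$ and $h$ remain equally important so that $G_v^t$ and $G_h^t$ keep competing for selection through Eq.~\eqref{eq:goodness}, $\mathcal{C}$ cannot force $v$ to be the pilot for the $n+1$ consecutive (respectively paired) epochs required to assemble the non-linear gradient system of Eq.~\eqref{eq:grad-sum}. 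Thus in every case the best outcome for $\mathcal{C}$ is an under-determined non-linear system, which by the reasoning already established in Theorems~\ref{thm:against_master} and~\ref{thm:against_collusion} yields no information on $v$'s dataset; combined with the channel protection of the threat model, this proves the theorem.

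I expect the main obstacle to be the step asserting that the ternary messages ``carry no usable information'' about $v$: a fully rigorous treatment would require an information-theoretic statement that $\mathcal{C}$'s posterior on $v$'s dataset equals its prior, which the present framework does not develop. The cleanest route, consistent with Theorems~\ref{thm:against_master} and~\ref{thm:against_collusion}, is to inherit exactly their standard — privacy means the attacker is reduced to solving an under-determined non-linear system — so that the genuine content of this theorem is the combinatorial observation that $|\mathcal{C}|\le N-2$ always leaves an honest worker besides $v$, which makes this collusion strictly weaker than the scenario already covered by Theorem~\ref{thm:against_collusion}.
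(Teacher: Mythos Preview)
Your proof is correct and reaches the same conclusion as the paper, but the organization is genuinely different. The paper argues attack-first: it exhibits the concrete strategy that succeeds when $N-1$ workers collude --- each colluder reports an artificially large cost so that Eq.~\eqref{eq:goodness} always selects the victim as pilot, and each sends an all-zero ternary vector so that $\mathcal{P}^t=\mathcal{Q}_v^t$ exactly --- and then shows this attack collapses at $N-2$ because the pilot role now alternates between the two honest workers in a way the colluders cannot track, while whichever honest worker is \emph{not} pilot contributes a genuinely non-zero ternary term. The paper's lever is thus identity confusion (is $\mathcal{P}^t$ built on $\mathcal{Q}_v^t$ or on $\mathcal{Q}_h^t$?) plus non-vanishing of $\mathcal{T}_h^t$. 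Your route is instead a defense-side audit: you enumerate $\mathcal{C}$'s view, case-split on whether $v$ is pilot, and isolate the ``irreducible unknown'' $h$-term in Eq.~\eqref{eq:master_update}. This buys a cleaner account of the information flow and makes explicit that even the ternary leakage is threshold-quantized and lossy, which the paper leaves implicit; what the paper's presentation buys is a sharper explanation of why the bound is exactly $N-2$, since it names the attack that works at $N-1$ and pinpoints the step the extra honest worker obstructs. Your argument would be tightened by stating explicitly that $\mathcal{C}$ cannot distinguish the epochs in which $v$ versus $h$ is pilot, since that identity uncertainty is the paper's central mechanism.
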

\begin{proof}
As discussed in the proof of Theorem~\ref{thm:against_collusion}, which is broken when $N-1$ workers are malicious and they collude together to retrieve information of the private dataset of the remaining worker (victim worker). Note that $N$ is the number of workers participating in the training process. This can be achieved when these colluding workers always send the master the  unchanged training cost in comparison to the previous train (i.e., the goodness equals zero.), making the master always request the remaining worker to send its local model instance. Further, these colluding workers will send all zeros for the ternary vector when requested by the master. Hence, the updated parameter based on the equation \eqref{eq:master_update} is basically contributed by the victim worker alone. Given a sufficiently large number of training epochs, these colluding workers can form a non-linear equation system to solve and infer the private dataset of the remaining worker. It is to be noted that in this extreme case, the honest-but-curious master can also get benefits as it maintains all the model instances requested by the victim worker. 

We now assume that there are a maximum of $N-2$ workers that are malicious and collude together. The two remaining workers are benign. After each training epoch, the master requests a local model instance from one of these two honest-but-curious workers. Given the equal importance of datasets owned by the two benign workers, the malicious workers will not know which worker is requested to send its local model instance. Further, the remaining benign worker could also send a non-zero ternary vector as its local model should evolve (converge) along with the training process. In this case, the honest-but-curious master is also not able to infer the private dataset of a particular worker as the two benign workers could be alternatively requested for their local model instance after each epoch. This proves the theorem. 
\end{proof}

\paragraph{Discussion}

The privacy scenario in the Theorem \ref{thm:against_master} is similar to the one of Theorem 2 in Section 4 using SNT system in~\cite{phong2019}. Although the master could get the model parameters for two consecutive epochs from worker $k$, the later model parameter could be involved in the dataset of other workers via their ternary vectors. Hence, the non-linear equation turns out much more complicated to be solved.
      
Let us assume that the master has the model parameters from worker $k$ for $2(n+1)$ epochs with the condition that there should be $n+1$ pair of consecutive epochs, the master can form a non-linear equation system with $n+1$ equations as shown in Eq.~\eqref{eq:grad-sum}. Recovering individual gradient information from worker $k$ now becomes the problem of solving a non-linear equation system for $n+1$ variables (the learning rate of worker $k$ and gradients of $n$ training epochs) given that the master has such a capability. It is also worth mentioning that the number of batches ($n$) is unknown to the master, to achieve such a non-linear equation system, the master may assume a sufficiently large $n$ to cover all possibilities of batch size at the workers. However, the above assumption could not happen in practice due to the fact that the dataset at the workers are equally important. At each epoch, the master could request the model instance from a different worker based on the value of the goodness function presented in Eq.~\eqref{eq:goodness}. In other words, our approach for selection of model instance from the workers based on the goodness function prevents an honest-but-curious master from consecutively requesting the model instance from the same worker.

In order to protect data privacy from the honest-but-curious master and/or colluding workers in the scenario that the same worker is requested to send his/her trained model parameter for several consecutive epochs, the worker can make his/her own rules. For example, after \textit{a fixed number of steps}, if the global model instance received from the master at the beginning of each epoch is always identical to its local model instance obtained from previous epoch, the worker can
\begin{enumerate}
    \item apply existing techniques for protecting data privacy such as homomorphic encryption~\cite{Shokri2015,Phong2018,Tang2019}, and differential privacy~\cite{Fredrikson:2015} by adding noises to its local model instance before sending to the master, or
    \item stop sending the local model instance by simply sending the cost  unchanged from the previous request (namely, the goodness is equal to zero).  
\end{enumerate}
    

\section{Experiments}
\label{sec:experiment}
\subsection{Experiment Setup}

We implemented the proposed framework, FedPC, in Python using the TensorFlow framework. The communication between the master and workers for the invocation of training and synchronization was implemented using secure socket programming. The data transfer (model instances and ternary vectors) was implemented using secure copy protocol (SCP). We deployed the framework on several computers including:
\begin{itemize}
    \item Three customized desktop with AMD Ryzen Threadripper $2950$X $16$-core Processor @ $3.5$GHz, $64$ GB of RAM and $2$ GeForce RTX $2080$ Ti, each having $11$ GB of memory.  
    \item A server with $40$ cores of Intel(R) Xeon(R) Silver 4210R CPU @ 2.40GHz, $512$ GB of RAM and $8$ Quadro RTX $5000$, each having $16$ GB of memory.
\end{itemize}
We used $1$ customized desktop as the master and the remaining for deploying workers, each running on a GPU card (i.e., the server can run upto $8$ workers). 

To demonstrate the performance of the proposed framework, we carried out the experiments with two datasets including CIFAR-10~\cite{cifar10} and the LGG Segmentation Dataset~\cite{BUDA2019218}, denoted as LGGS, which is downloaded from its Kaggle Repository\footnote{LGGS Dataset:~\url{https://www.kaggle.com/mateuszbuda/lgg-mri-segmentation}}. We split LGGS Dataset to training and test set with a ratio $80:20$ for our experiments. We evaluated the proposed framework using different performance metrics:
\begin{itemize}
   
    \item Performance approximation: We trained the respective model in a centralized approach where all the training datasets are stored in the same location. The performance of the centrally-trained model (e.g., classification accuracy or segmentation accuracy) is considered as the upper bound to compute the performance approximation ratio of the model trained with the distributed approach using the proposed framework. 
    
    \item Amount of data exchanged during the training: We computed the amount of data exchanged among the master and workers per epoch during the training. This is an additional communication overhead which does not incur in the centralized training. The more the data exchanged, the longer the data transferring time and the higher the network bandwidth consumption.
\end{itemize}

We compare the performance of the proposed framework with two existing works: Phong \textit{et al.}'s method~\cite{fedavg2017} and FedAvg~\cite{phong2019} in two performance metrics: classification and segmentation accuracy of the models, and communication overhead.

\paragraph{Design of Deep Learning Models}
We used TensorFlow to implement deep learning models. The structures of existing deep neural networks have been adopted for respective datasets: ResNet50 FIXUP~\cite{Fixup_ZhangDM19} for CIFAR-$10$ dataset and U-Net~\cite{olaf:2015} for LGGS. We kept the design of neural networks (i.e., number of convolutional layers, number of filters, etc.) the same as the original design. We adopted the source codes of ResNet50 FIXUP from its GitHub repository\footnote{ResNet50 FIXUP:~\url{https://github.com/ben-davidson-6/Fixup/tree/master/tensorflow_implementation}}. For U-Net, the size of images is set to $256\times{}256$. We also applied padding for the convolution operations in U-Net.

\paragraph{Hyper-parameter Settings}

As we discussed earlier, hyper-parameters used for training the model at the workers are heterogeneous and private to them. The framework allows the workers to choose the batch size and local training epochs among a predefined list of values. For instance, the workers can take a batch size value among $128$, $64$ or $32$. For the LGGS dataset, the batch size can be selected among $16$, $8$ or $4$. We limit the upper bound due to the limited capacity of our GPU cards. The initial learning rate for all workers is set to $0.01$ but we apply the decay based on training steps, which is in turn based on the size of the local training dataset. This makes the learning rate of the workers also become heterogeneous after a few training epochs. We used Momentum optimizer~\cite{QIAN1999145} for ResNet50 FIXUP model and Adam optimizer~\cite{kingma:2015} for U-Net.

\subsection{Performance Analysis}
\subsubsection{Performance with Centralized Training}

We carried out the centralized training, which is similar to the training with only one worker that has the entire dataset. In Table~\ref{tab:centralized_training}, we present the accuracy of the models on different datasets. We achieved the expected state-of-the-art performance for each of the datasets. This shows that we have successfully implemented state-of-the-art models and reproduced the same performance. It is to be noted that there exist enhanced models that achieve higher performance for CIFAR-$10$. For instance, the work presented in~\cite{resnet32:2016} achieves a performance of $92.49\%$ of classification accuracy. In those models, multiple \texttt{BatchNorm} layers are used. However, the parameters of the \texttt{BatchNorm} layers computed on the entire dataset (in case of centralized training) is much different from those computed on individual private dataset of each worker. Sharing those parameters will reveal data information, especially mean and standard deviation of data distribution. Thus, we used a neural network architecture without \texttt{BatchNorm} layers. We use this performance as a (upper-bound) reference for comparison when evaluating the proposed framework (FedPC) that enables parallel training on distributed and private datasets. 
\begin{table}[t]
\caption{Performance with centralized training}  
\label{tab:centralized_training}
\centering
\begin{tabular}{lrr}
\hline
{\bf Dataset} & {\bf \#Epochs}&  {\bf Accuracy} \\

\hline
CIFAR-$10$ &$250$ & $0.9172 \pm 0.0008$\\

LGGS &$35$&  $0.9959 \pm 0.0004$ \\
\hline
\end{tabular}
\end{table} 
 
\subsubsection{Performance of FedPC}

\begin{figure}[t] 
    \centering 
    \includegraphics[width=0.48\textwidth]{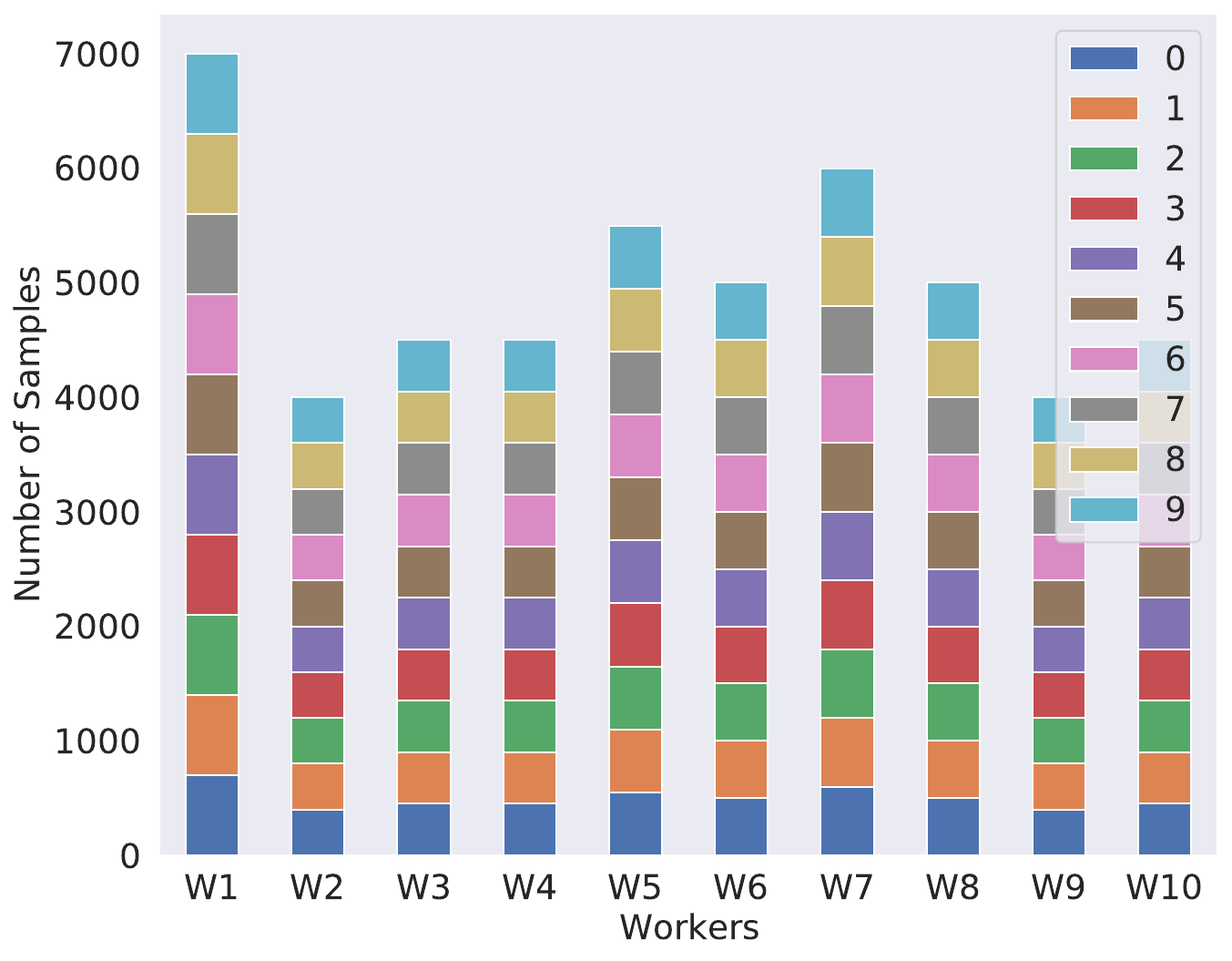}
    \caption{Data Distribution of CIFAR-$10$ among Workers.}
    \label{fig:data_dist_balance}
\end{figure}

\begin{figure}[t] 
    \centering 
    \includegraphics[width=0.48\textwidth]{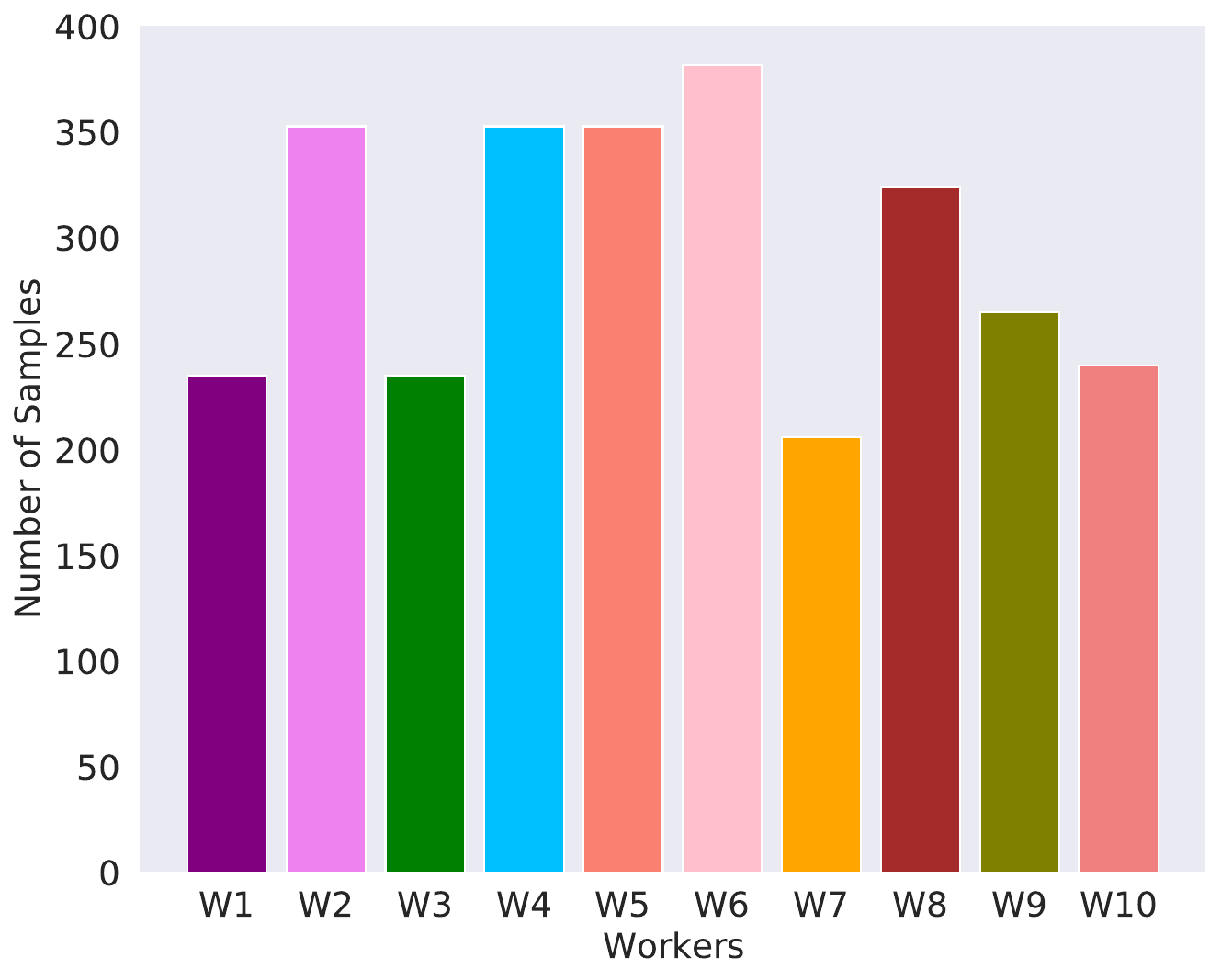}
    \caption{Data Distribution of LGGS among Workers.}
    \label{fig:data_dist_lggs}
\end{figure}  
 
To carry out parallel training on distributed datasets with FedPC, we randomly split the entire dataset into a specific number of portions and distributed them to a corresponding number of workers. We controlled the range of random variables to avoid the extreme imbalance scenarios of the split datasets, e.g., one worker has only $1\%$ of data while another worker has more than $90\%$ of data. We first randomly generate the percentage values, each corresponding to the data percentage that each worker will process. The percentage values sum to $100$. For CIFAR-$10$, the percentage values are then used to distribute the samples of each class of the dataset to various workers. This leads to the fact that the total number of data samples processed by each worker is heterogeneous/imbalanced.  But at each worker, the number of samples in each class is equal. An example of CIFAR-$10$ split into $10$ workers is shown in Figure~\ref{fig:data_dist_balance} where each color represents a sample class in the training dataset. For the LSSG dataset, as this is a segmentation problem, there is no data distribution at the class level. We show the distribution of the dataset among workers in Figure~\ref{fig:data_dist_lggs} where each color represents a worker. We carried out the experiments (parallel training) with an increasing number of workers. 

\paragraph{Performance Approximation}

In this section, we analyze the performance of FedPC in terms of the accuracy of the models trained by FedPC with an increasing number of workers. In Table~\ref{tab:accuapprox}, we present the accuracy of the models on the test set of CIFAR-10 dataset obtained with our proposed framework and the baselines.

\begin{table*}[t]
\caption{Accuracy Obtained with Different Algorithms on CIFAR-10 Test Set}
\label{tab:accuapprox}
\centering
\begin{tabular}{lrrr}
\hline
  {\bf Number of Workers} & {\bf FedPC} & {\bf Phong \textit{et al.}~\cite{phong2019}} & {\bf FedAvg~\cite{fedavg2017}} \\
\hline
$3$ & $0.9009 \pm 0.0031$ & $0.9101 \pm 0.0030$  & $0.9113 \pm 0.0014$ \\
\hline
$4$ & $0.8914 \pm 0.0030$ & $0.9119 \pm 0.0024$ & $0.9044 \pm 0.0053$ \\
\hline
$5$ & $0.8819 \pm 0.0024$ & $0.9115 \pm 0.0038$ & $0.9021 \pm 0.0030$\\
\hline
$6$ & $0.8752 \pm 0.0041$ & $0.9134 \pm 0.0034$ & $0.8992 \pm 0.0029$ \\
\hline
$7$ & $0.8697 \pm 0.0011$ & $0.9102 \pm 0.0028$ & $0.8935 \pm 0.0001$ \\
\hline
$8$ & $0.8632 \pm 0.0057$ & $0.9117 \pm 0.0026$ & $0.8906 \pm 0.0013$ \\
\hline
$9$ & $0.8471 \pm 0.0011$ & $0.9095 \pm 0.0010$ & $0.8824 \pm 0.0013$ \\
\hline
$10$ & $0.8388 \pm 0.0015$ & $0.9120 \pm 0.0011$ & $0.8772 \pm 0.0023$ \\
\hline
\end{tabular}
\end{table*} 

\begin{table*}[t]
\caption{Accuracy Obtained with FedPC on LGGS Test Set}
\label{tab:acculggs}
\centering
\begin{tabular}{lrrr}
\hline
  {\bf Number of Workers} & {\bf FedPC} & {\bf Phong \textit{et al.}~\cite{phong2019}} & {\bf FedAvg~\cite{fedavg2017}}\\
\hline
$3$ & $0.9952 \pm 0.0001$ & $0.9966 \pm 0.0001$ & $0.9959 \pm 0.0001$\\
\hline
$4$ & $0.9951 \pm 0.0003$ & $0.9913 \pm 0.0044$ & $0.9955 \pm 0.0005$\\
\hline
$5$ & $0.9887 \pm 0.0001$ & $0.9962 \pm 0.0001$ & $0.9926 \pm 0.0034$ \\
\hline
$6$ & $0.9915 \pm 0.0025$ & $0.9936 \pm 0.0042$ & $0.9944 \pm 0.0012$ \\
\hline
$7$ & $0.9914 \pm 0.0023$ & $0.9926 \pm 0.0055$ & $0.9942 \pm 0.0012$\\
\hline
$8$ & $0.9925 \pm 0.0010$ & $0.9912 \pm 0.0043$ & $0.9904 \pm 0.0019$\\
\hline
$9$ & $0.9901 \pm 0.0024$ & $0.9936 \pm 0.0043$ & $0.9938 \pm 0.0006$\\
\hline
$10$ & $0.9887\pm 0.0001$ & $0.9959 \pm 0.0005$ & $0.9887 \pm 0.0002$\\
\hline
\end{tabular}
\end{table*} 

As expected, we observed a performance drop when training a model in parallel on distributed datasets. Such a drop slightly increases along with the increase in the number of workers. With CIFAR-$10$ dataset, when training with $3$ workers, we observed a drop of $1.8\%$ compared to the performance of the centralized training approach. With the existing approaches (Phong \textit{et al.}~\cite{phong2019} and FedAvg~\cite{fedavg2017}), we also observed a slight performance drop of $0.6\%$ for both approaches. Such performance drop is due to the fact that the model training does no longer benefit from the data shuffling after each training epoch. Shuffling on a private (smaller) dataset does not create much variance to update the model parameters. While CIFAR-10 is a benchmarking dataset, real-world datasets may be more heterogeneous, e.g., due to the quality of data collection devices. Additionally, to protect the data privacy, the proposed approach does not exchange the model parameters (i.e., weights and bias of neurons) as does Phong \textit{et al.}~\cite{phong2019}'s method and FedAvg. This leads to further degradation of performance of the proposed approach compared to Phong \textit{et al.}~\cite{phong2019}'s method. It is interesting to note that performance degradation with Phong \textit{et al.}'s method keeps unchanged when increasing the number of workers, both the proposed framework (FedPC) and FedAvg incur larger performance drop even though FedAvg also exchanges model parameters. When the data is distributed to $10$ workers, the drop is $8.5\%$ for FedPC and $4.4\%$ for FedAvg.

In Table~\ref{tab:acculggs}, we present the performance of the proposed framework and the baselines on the LGGS dataset using a U-Net architecture. While we observed similar trends of performance, the performance degradation is very minor. In the worst scenario, the performance loss is only $0.7\%$. Compared to the performance loss incurred with the CIFAR-10 dataset, the performance loss incurred with the LGGS dataset is much smaller and could be negligible. This high performance could be explained by the fact that the quality of data stored at different computing nodes is not much different from each other. Furthermore, this is a segmentation problem, the performance of the segmentation model (U-Net) is not affected by the imbalance of the data among workers as well as at each individual worker. We believe that such a slight performance drop could be acceptable in regards to the gain of data privacy, thus demonstrating the effectiveness of our proposed framework.  

\paragraph{Performance Convergence}

\begin{figure}[t]
\centering
  \subfloat[Training Cost Evolution on CIFAR-10 Dataset.]
    {\label{fig:cifar-cost}\includegraphics[width=0.46\textwidth]{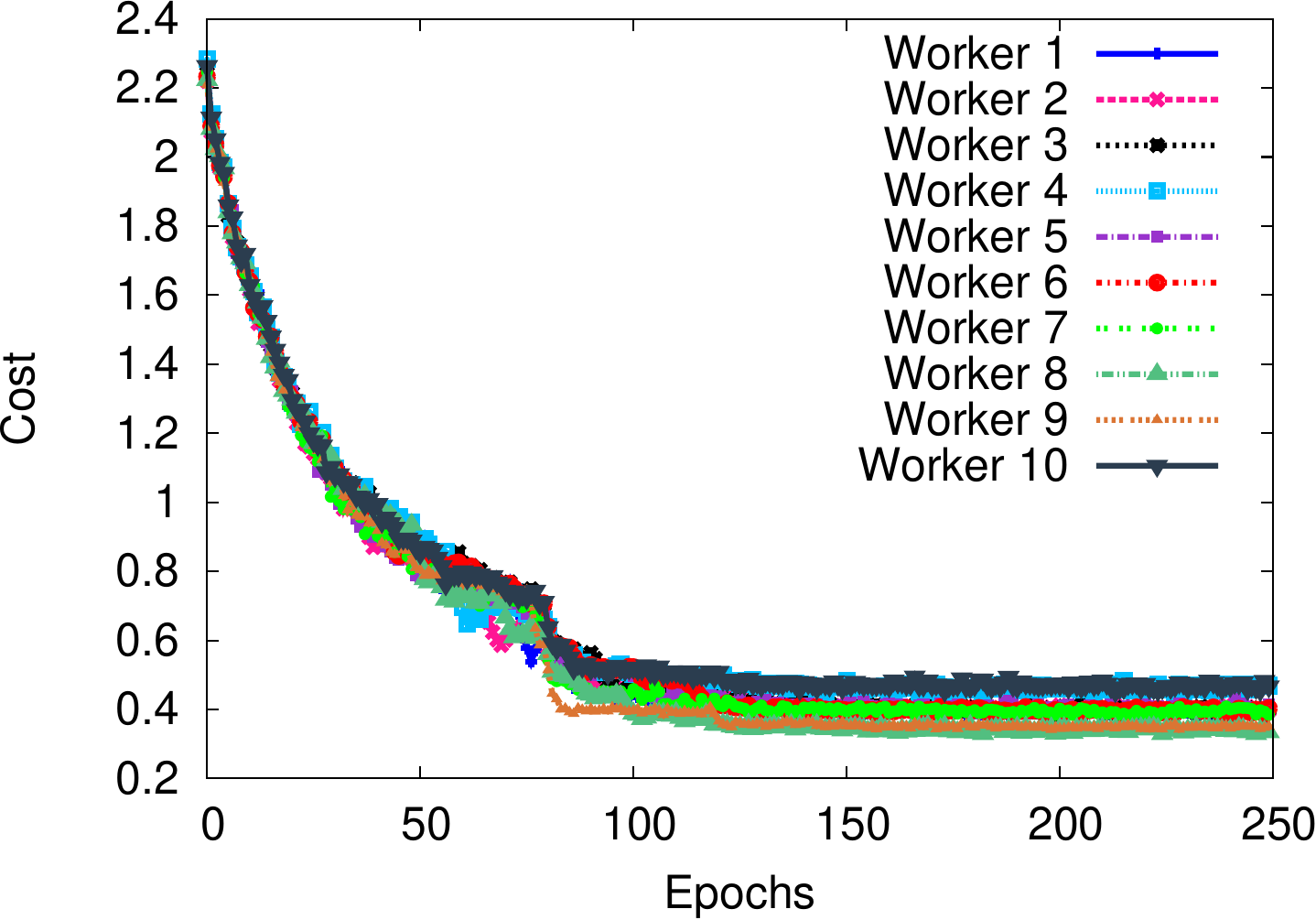}}\\
  \subfloat[Training Cost Evolution on LGGS Dataset.]
    {\label{fig:lggs-cost}\includegraphics[width=0.46\textwidth]{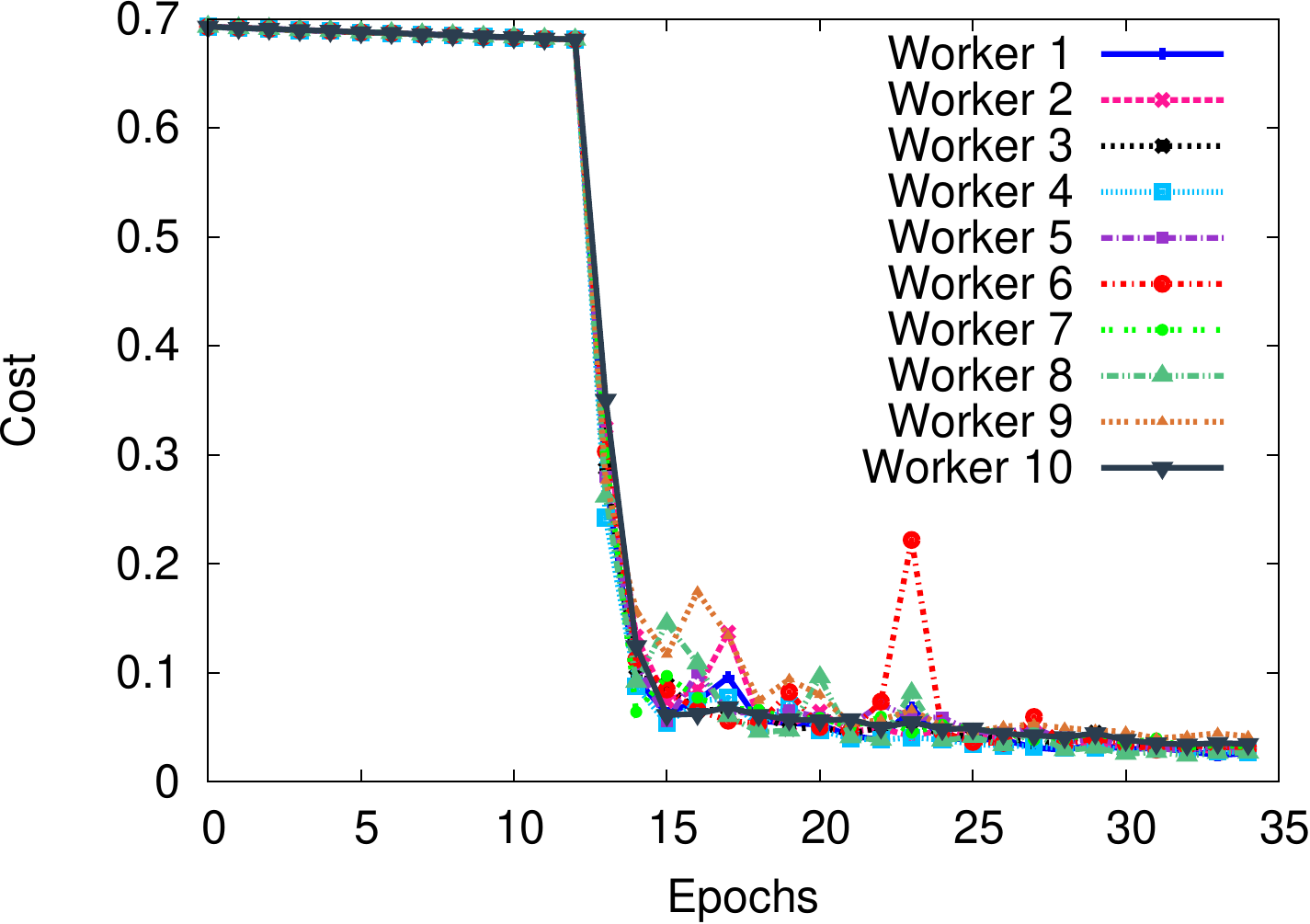}}
\caption{Training Convergence through Cost Reduction.}
\label{fig:convergence}
\end{figure}

While we provide mathematical proofs of model convergence in~\ref{sec:proofthm1}, we additionally analyze the convergence of the model training through the cost reduction along with training epochs. In Figure~\ref{fig:convergence}, we present the evolution of training cost of the models trained on the two datasets through training epochs. The experimental results show that the training cost gradually reduces and keep stable after a certain number of training epochs (e.g., $100$ epochs for CIFAR-10 and $25$ epochs for LGGS). Since the number of training epochs with CIFAR-10 is high, the plot does not clearly show the behavior for the first few epochs. However, we can observe such a behavior in Figure~\ref{fig:lggs-cost}. As such, the cost does not reduce significantly after the first few epochs before it starts converging to the minimum-achievable cost. This is explained by the fact that the workers in the proposed framework updates the master the evolution direction of model parameters rather than their absolute value or gradient. In other words, the master only know whether a parameter evolves in the same or oposite direction compared to that of the previous training epoch. Thus, the correct evolution can only be observed from the third training epoch onward. Combined with a small learning rate set for the training process, we obtained this behavior. Nevertheless, the training still converges, allowing the models trained with the proposed framework to approximate the performance of the models trained in the centralized manner. 



\paragraph{Performance with Non-identically Distributed Data}

\begin{figure}[t]
    \centering
    \includegraphics[width=0.48\textwidth]{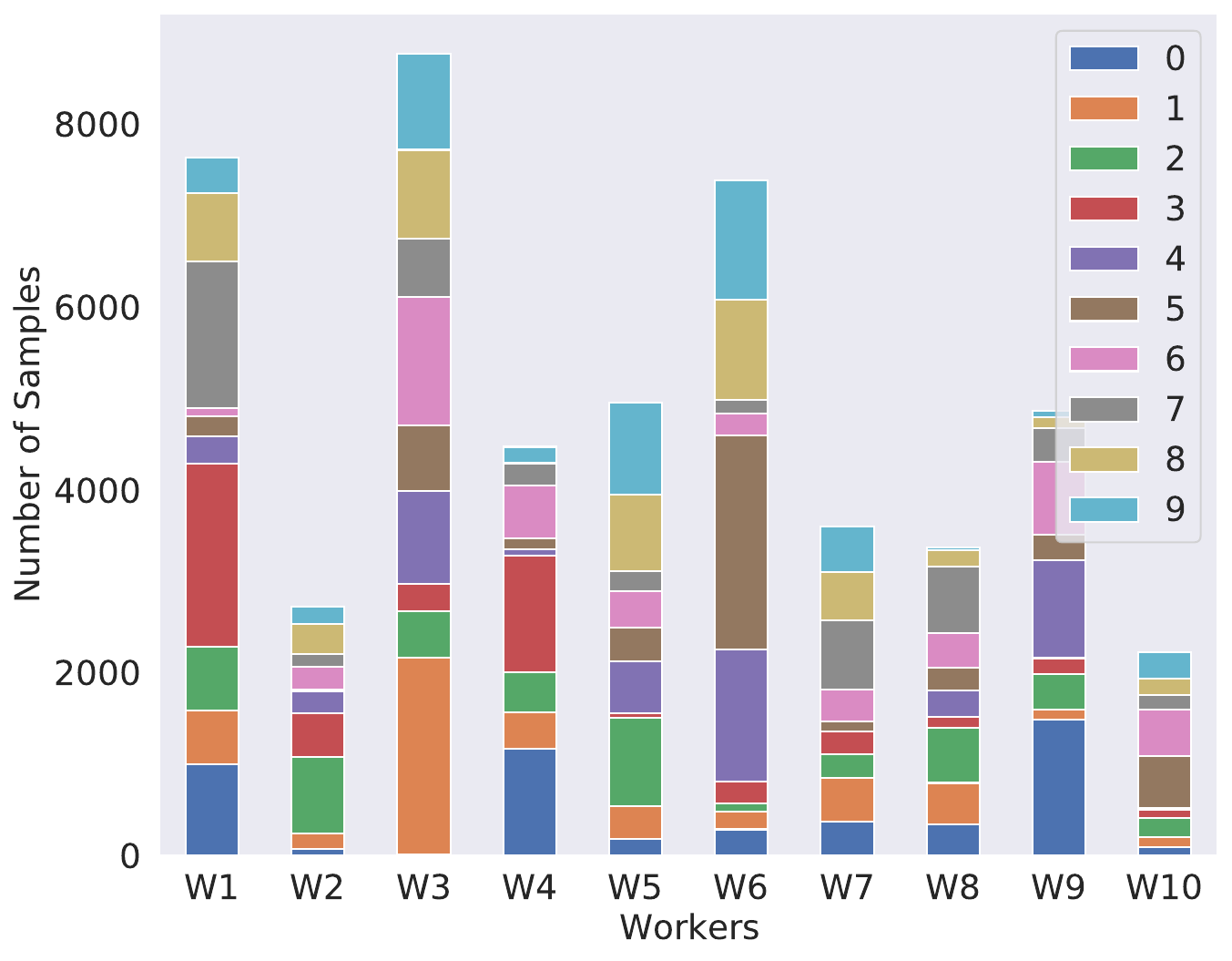}
    \caption{Data Distribution of CIFAR-$10$ among Workers using Dirichlet Distribution.}
    \label{fig:data_dist_noniid}
\end{figure}

In this experiment, we use the Dirichlet distribution~\cite{Minka00estimatinga} to split data among workers. Figure~\ref{fig:data_dist_noniid} shows an example of data distribution of CIFAR-$10$ with $10$ workers split by the Dirichlet distribution used for training ResNet50 FIXUP models. In Table~\ref{tab:accucifar10_noniid}, we present the performance of FedPC along with two baselines. Compared to the results shown in Table~\ref{tab:accuapprox}, the results of this experiment show that all the frameworks exhibit slight performance degradation when training with distributed data split by the Dirichlet distribution. This is reasonable for any multi-class classification models when training on an imbalanced dataset. Overall, we can observe a clear trade-off between data privacy and classification accuracy, FedPC provides the highest data privacy, thus achieving lower classification accuracy with the worst drop of $6.3\%$ and $10.4\%$ compared to FedAvg and Phong \textit{et al.}'s method, respectively. It is to be noted that this is affected only with multi-class classification problems when the data is very imbalanced at computing workers. This could be rarely happen in reality as each data owner could collect sufficient samples for each class. This demonstrates the broad practicality of the proposed framework to other deep learning problem such as image segmentation, object detection, etc.

\begin{table}[t]
\caption{Accuracy of FedPC Trained on Non-IID CIFAR-10}
\label{tab:accucifar10_noniid}
\centering
\begin{tabular}{lrrr}
\hline
  {\bf No. Workers} & {\bf FedPC} & {\bf Phong \textit{et al.}~\cite{phong2019}} & {\bf FedAvg~\cite{fedavg2017}}\\
\hline
$3$ & $0.8927$ & $0.9051$ & $0.9016$\\
\hline
$5$ & $0.8564$ & $0.8955$ & $0.8879$ \\
\hline
$7$ & $0.8317$ & $0.8941$ & $0.8751$\\
\hline
$10$ & $0.8001$ & $0.8934$ & $0.8540$\\
\hline
\end{tabular}
\end{table} 

\paragraph{Amount of Data Exchanged Among Master and Workers}

We have measured the amount of data exchanged among the master and workers during a training epoch. We compared our proposed approach with two existing works. The total amount of data exchanged between the master and workers depends on the size of the models, size of ternary vectors and the number of workers. For every training epoch, our framework needs to copy an initial model from the master to all the workers. After completing the epoch, one worker needs to send its local model instance to the master and other $N-1$ workers need to send the ternary vector, whose size is $16\times$ smaller than that of the model instance. The total amount of data exchanged between the master and workers per epoch can be computed as follows:
\begin{equation}
    D = V(N+1) + \dfrac{V(N-1)}{16}
\end{equation}
where $V$ is the size of a model instance (i.e., $35$ MB for a ResNet$50$ FIXUP model instance and $119$ MB for a U-Net model instance) and $N$ is the number of workers. We note that we used \texttt{float32} data type for all the model parameters when implementing the training framework.

\begin{figure}[t]
    \centering
    \includegraphics[width=0.46\textwidth]{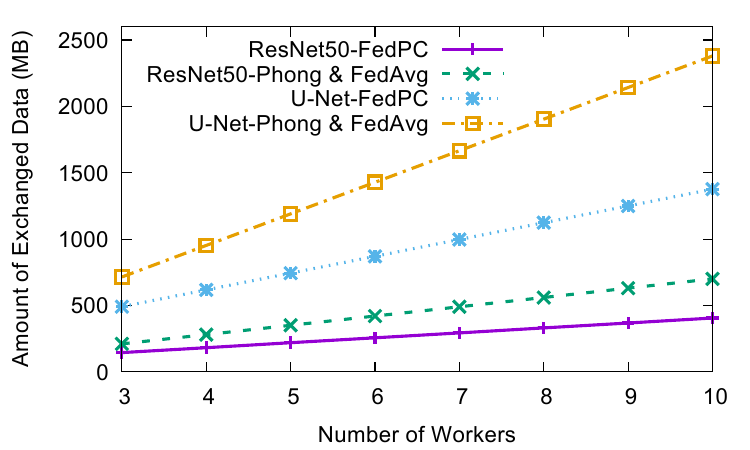}
    \caption{Amount of Data Exchanged in Each Training Epoch.}
    \label{fig:exchangeddata}
\end{figure}

Phong \textit{et al.}'s method~\cite{phong2019} and FedAvg~\cite{fedavg2017} require every worker to download the initial model instance and upload its local model instance to the master after each epoch, leading to a total amount of data of $2\times{}V\times{}N$. In Figure~\ref{fig:exchangeddata}, we present the amount of data exchanged among the master and workers per epoch when training the models in parallel with different numbers of workers. The results show that our proposed framework (FedPC) significantly reduces the amount of data exchanged among the master and workers. Compared to Phong \textit{et al.}'s method and FedAvg, FedPC reduces the amount of data by at least $31.25\%$ for both models. The improvement is much more significant when there are more workers deployed in the framework, i.e., data is distributed to many more geographical locations. Indeed, when there are $10$ workers, the proposed approach reduces the amount of data by upto $42.20\%$. The total amount of data exchanged to complete the model training linearly increases along with the number of training epochs. For a fair comparison, we kept the same number of training epochs for both centralized training and distributed training of all algorithms, i.e., 250 epochs for ResNet50 FIXUP and 35 epochs for U-Net models. The experimental results show that our framework (FedPC) incurs the least communication overhead and offers the highest data privacy while achieving good performance approximation to the centralized training.

\paragraph{Study with a Real-world Application}

\begin{figure}[t]
\centering
\begin{minipage}[b]{.5\textwidth}
  \subfloat[Original Image.]
    {\label{fig:habitg1_histo}\includegraphics[width=0.46\textwidth]{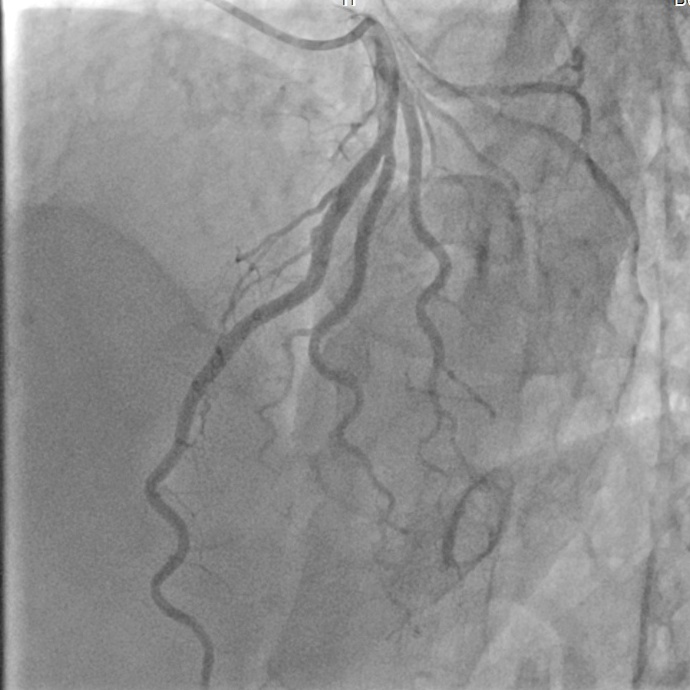}}\hspace{2mm}
  \subfloat[Ground Truth Label.]
    {\label{fig:habitg2_histo}\includegraphics[width=0.46\textwidth]{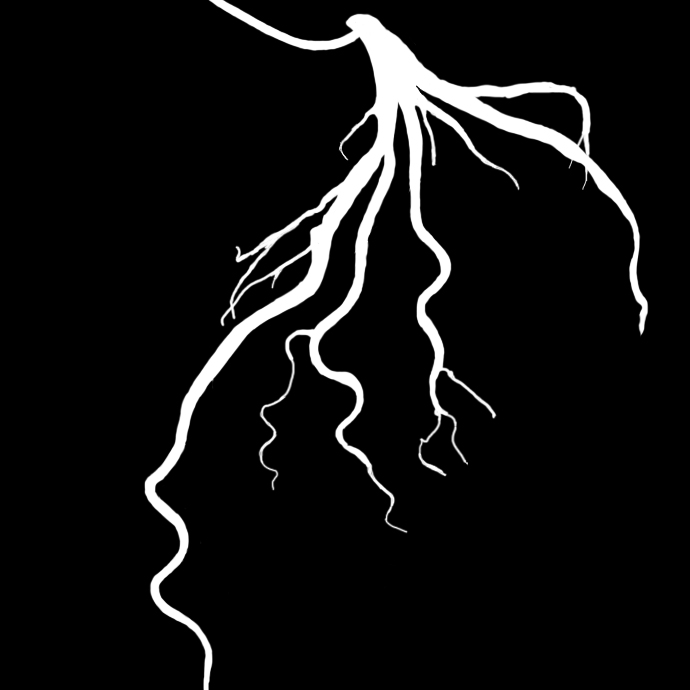}}
    \\
  \subfloat[By Centralized Model.]
    {\label{fig:habitg3_histo}\includegraphics[width=0.46\textwidth]{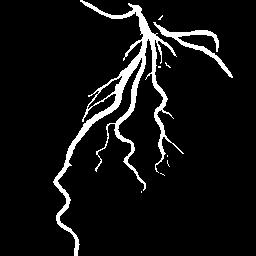}} 
    \hspace{2mm}
  \subfloat[By Model trained with $3$ Workers.]
    {\label{fig:habitg4_histo}\includegraphics[width=0.46\textwidth]{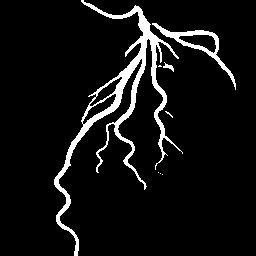}}
\end{minipage}
\caption{Segmentation Results with U-Net on HEART-VESSEL.}
\label{fig:illustration}
\vspace{-1.5ex}
\end{figure}

We further studied the performance of the proposed framework, FedPC, with a real-world application used at the School of Medicine, Tan Tao University. The application is also an image segmentation problem applied to heart vessels. The dataset includes $45$ heart vessel images, stored in $3$ different locations. We deployed our framework with $3$ workers at the three storage locations. The experimental results show that the centralized training approach achieves a segmentation accuracy of $0.9214$ while the model trained with the distributed approach with $3$ workers approximates the performance with an accuracy of $0.9210$. For illustration, in Figure~\ref{fig:illustration}, we present the segmentation results obtained by different models and refer to the segmentation obtained by the clinician. We can see that the segmentation obtained by the models trained with FedPC is almost the same as the ground truth label (i.e., segmentation performed by the clinician).




\section{Conclusion}
\label{sec:conclusion}
In this paper, we designed and developed FedPC, a federated deep learning framework for privacy preservation and communication efficiency. FedPC allows a deep learning model to be trained in parallel on geographically-distributed datasets, which are assumed to be in the same data distribution. We defined a goodness function that helps the master evaluate the importance of the model instances obtained at the workers. The master holds a global model instance and updates it after each training epoch. We developed a ternarizing approach that allows the workers to inform the master about the evolution of the model on their local dataset without revealing the gradients and data samples. We formally proved the convergence of the models trained by our framework and the privacy-preserving properties guaranteed. We extensively carried out experiments with popular deep learning models that are used for diverse problems such as image classification and segmentation. The experimental results show that our proposed framework achieves the data protection without sacrificing the performance of the models (i.e., accuracy). FedPC maintained a performance degradation less than $8.5\%$ of the model trained with the centralized approach. Compared to existing approaches, our framework reduced the amount of data exchanged among the master and workers by up to $42.20\%$ when training the models with $10$ workers. 

\section*{Acknowledgements}
This work was partially supported by Tan Tao University Foundation for Science and Technology Development under the Grant No. TTU.RS.19.102.005.

\appendix

\section{Proof of Theorem~\ref{thm:convergence}}
\label{sec:proofthm1}
\begin{proof}
In \cite{Zinkevich:2010}, the authors have proved that in a regular distributed-learning framework, a learning model will converge to the optimal instance if all the workers send gradient obtained on their local dataset to the master. With the gradient, a model will be updated as follows:
\begin{equation}
    \mathcal{P}^t = \mathcal{P}^{t-1} - \dfrac{1}{S}\displaystyle\sum_{k=1}^N\beta_kS_k\mathcal{G}_{k}^{t-1}
    \label{eq:gradientupdate}
\end{equation}
where $S_k$ is the size of dataset at worker $k$, $S$ is the sum of all private datasets at the workers, $S=\sum_{k=1}^N{}S_k$, and $\mathcal{G}_{k}^{t-1}$ is the average gradient at a data point at epoch $t-1$ from worker $k$, i.e.,
\begin{equation}
    \mathcal{G}_{k}^{t-1}= \dfrac{1}{b}\displaystyle\sum_{j\in D_{k,b}}\nabla{L}(\mathcal{P}^{t-1}, X_j)
\end{equation}
where $D_{k,b}$ is a data batch and $X_j$ is a data point. Given a certain worker $k^*$ and let $p_k = S_k/S$, we can rewrite Eq.~\eqref{eq:gradientupdate} as follows: 
\begin{align}
    \mathcal{P}^t  &=  \mathcal{P}^{t-1} - p_{k^*}\beta_{k^*}\mathcal{G}_{k^*}^{t-1} - \displaystyle\sum_{k\neq{k^*}}p_k\beta_k{}\mathcal{G}_{k}^{t-1}\nonumber\\
    & = \mathcal{Q}_{k^*}^t - \displaystyle\sum_{k\neq{k^*}}p_k\beta_k{}\mathcal{G}_{k}^{t-1}
\end{align}
where 
\begin{align}
    \mathcal{Q}_{k}^t & =  \mathcal{P}^{t-1} - p_{k}\beta_{k}\mathcal{G}_{k}^{t-1} 
     = \mathcal{P}^{t-1} - \beta_{k}\mathcal{G}_{k}^{t-1} + (1-p_{k})\beta_{k}\mathcal{G}_{k}^{t-1} \nonumber\\ 
    & = \mathcal{P}_{k}^t + (1-p_{k})\beta_{k}\mathcal{G}_{k}^{t-1}.
\end{align}
$\mathcal{P}_{k}^t$ can be considered as the updated model from worker $k$ based on its dataset with learning rate $\beta_k$ while $\mathcal{Q}_{k}^t$ is the one at the master and be updated with the learning rate $p_k\beta_k$.

We approximately have $|\mathcal{C}(\mathcal{P}_{k}^t) - \mathcal{C}(\mathcal{P}^{t-1})| \cong L||\mathcal{P}_k^t - \mathcal{P}^{t-1}||$ where $\mathcal{P}_k^t$ is the model instance obtained by worker $k$ at epoch $t$ and $\mathcal{P}^{t-1}$ is the global model instance at the master, $\mathcal{C}$ is the cost function. Thus, the goodness of the model instance obtained by worker $k$ can be approximately computed as follows:
\begin{equation}
p_k|\mathcal{C}(\mathcal{P}_k^t) - \mathcal{C}(\mathcal{P}^{t-1})| \cong L{}p_k||\mathcal{P}_k^t - \mathcal{P}^{t-1}|| = L{}p_k\beta_k||\mathcal{G}_k^{t-1}||.
\end{equation}

Determining worker $k^*$ that has the highest goodness value of the model instance is thus solving the following problem:
\begin{equation}
    k^* = \arg\max_k(p_k\beta_k||\mathcal{G}_k^{t-1}||) = \arg\max_k(||\mathcal{Q}_k^t - \mathcal{P}^{t-1}||).
\end{equation}
Note that $||\mathcal{Q}_k^t - \mathcal{P}^{t-1}||=p_k||\mathcal{P}_k^t-\mathcal{P}^{t-1}||$.

Let us now consider a voted framework in which the global model instance at the master will be updated based on the model instance obtained by worker $k^*$. We have, 
\begin{equation}
    \mathcal{P}^t = \mathcal{Q}_{k^*}^t - \displaystyle\sum_{k\neq{k^*}}p_k\beta_k{}\mathcal{G}_{k}^{t-1} \cong \mathcal{Q}_{k^*}^t - \displaystyle\sum_{k\neq{k^*}}p_k\beta_k{}\mathcal{T}_{k}^t(\mathcal{P}^{t-1} - \mathcal{P}^{t-2}).    
\end{equation}
It is to be noted that model $\mathcal{P}$ is defined as an $M$-dimensional vector with $M$ parameters $\{P_i, i=1\ldots{}M\}$. Each dimension is considered as a direction that the model will move to converge to the optimal instance. Parameter $P_i$ at time $t$ will be updated as follows:
\begin{align}
P_i^t & = Q_{k^*,i}^t - \displaystyle\sum_{k\neq{k^*}}p_k\beta_k{}G_{k,i}^{t-1} \nonumber\\
&\cong Q_{k^*,i}^t - \displaystyle\sum_{k\neq{k^*}}p_k\beta_k{}T_{k,i}^t(P_i^{t-1} - P_i^{t-2}).    
\end{align}
This means that for worker $k$ alone, $Q_{k,i}^t= P_i^{t-1} - p_k\beta_k{}G_{k,i}^{t-1}$ or $Q_{k,i}^t - P_i^{t-1} = - p_k\beta_k{}G_{k,i}^{t-1}$. If the change in any direction $i$ is not ``significant'' (which is less than a threshold), it would be ignored. Otherwise, it would be approximated by the threshold. We take the threshold to be some percentage of the previous step that a parameter has changed, which is $\beta_k|P_i^{t-1} - P_i^{t-2}|$. Thus, we have the ternary value $T_{k,i}^t$ defined as 
\begin{equation}
T_{k,i}^{t} = 
    \begin{cases}
    0 &\text{if}\; |Q^{t}_{k,i} - P_i^{t-1}| < \beta_{k}|P_i^{t-1} - P_i^{t-2}|, \\
    \texttt{sign}(f)  &\text{otherwise} \\
    \end{cases}
\end{equation}
where $f = (Q^{t}_{k,i} - P_i^{t-1})(P_i^{t-1} - P_i^{t-2})$ and $P_i^{t-1} - P_i^{t-2}$ is the step that parameter $P_i$ has moved in direction $i$ at time $t-1$. 
\begin{figure}[t]
\centering 
\begin{minipage}[b]{.5\textwidth}
  \subfloat[Change in direction $i$ by worker $k$ is significant (but $j$) and the change has the same sign as $(P_i^{t-1} - P_i^{t-2})$.]
    {\label{fig:proof_illustration_forward}\includegraphics[width=0.46\textwidth]{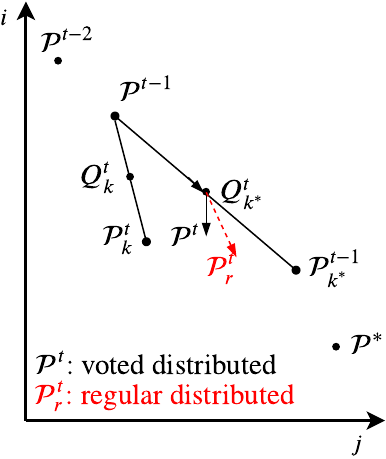}}\hspace{2mm}
  \subfloat[Change in direction $i$ by worker $k$ is significant (but $j$) and the change has the opposite sign as $(P_i^{t-1} - P_i^{t-2})$.]
    {\label{fig:proof_illustration_backward}\includegraphics[width=0.46\textwidth]{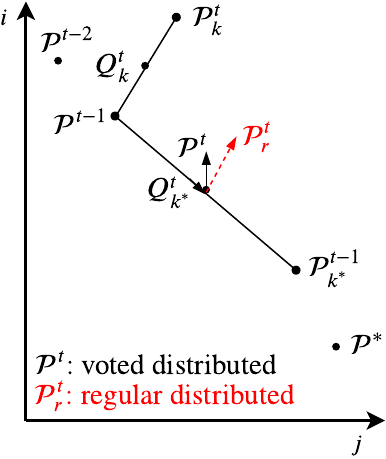}}
\end{minipage}
\caption{Illustrative Example for Training Convergence.}
\label{fig:proof_illustration}
\vspace{-0.5ex}
\end{figure}

In our approach, the selected worker $k^*$ is the one that reduces the cost most at epoch $t$ individually. It is well-known that the cost reduction is proportional to the step size that the model parameters move. Practically speaking, the distance $||\mathcal{Q}_{k^*}^t-\mathcal{P}^{t-1}||$ is larger than any other $||\mathcal{Q}_{k}^t-\mathcal{P}^{t-1}||$, thus model $\mathcal{P}^t$ is computed mainly based on $\mathcal{Q}_{k^*}^t$. In addition, we also take into account the change in direction $i$ from any worker $k$ other than worker $k^*$ if the change is significantly large. Thus, parameter $P_i^t$ of model $\mathcal{P}^t$ is updated along direction $i$ by $\beta_k$ of the previous step $|P_i^{t-1} -P_i^{t-2}|$ to ensure that it still points towards the optimal parameter $P_i^*$ of the optimal model $\mathcal{P}^*$. Specifically, if the change $(P_{k,i}^t-P_i^{t-1})$ is in the same direction as in the previous step $(P_{i}^{t-1}-P_i^{t-2})$, the point $\mathcal{Q}_{k^*}^t$ is pushed forward to $\mathcal{P}^*$ in direction $i$ as in Fig.~\ref{fig:proof_illustration_forward}. Otherwise, it is pushed backward by a small step $\beta_k|P_i^{t-1} - P_i^{t-2}|$ as illustrated in Fig.~\ref{fig:proof_illustration_backward}. This reflects the general trends when training a model that all workers have their own locally-updated model $\mathcal{P}_k^t$ pointing towards $\mathcal{P}^*$. The master simply somehow takes the weighted average of those $\mathcal{P}_k^t$'s as $\mathcal{P}^t$ then this $\mathcal{P}^t$ also points towards $\mathcal{P}^*$. We note that the illustrative example in Fig.~\ref{fig:proof_illustration} assumes that model $\mathcal{P}$ has two parameters. For a general case with $M$ parameters (each parameter is considered as a dimension), dimension reduction can be performed so that the convergence of the training can be visualized. This proves the theorem.
\end{proof}

\bibliographystyle{elsarticle-num}
\bibliography{jsa-cs4iot2021}	
\end{document}